\documentclass[sigconf]{acmart}
%% Fonts used in the template cannot be substituted; margin 
%% adjustments are not allowed.
%%
%% \BibTeX command to typeset BibTeX logo in the docs
\AtBeginDocument{%
  \providecommand\BibTeX{{%
    \normalfont B\kern-0.5em{\scshape i\kern-0.25em b}\kern-0.8em\TeX}}}

\usepackage[utf8]{inputenc} % allow utf-8 input
\usepackage[T1]{fontenc}    % use 8-bit T1 fonts
\usepackage{hyperref}       % hyperlinks
\usepackage{url}            % simple URL typesetting
\usepackage{booktabs}       % professional-quality tables
\usepackage{amsfonts}       % blackboard math symbols
\usepackage{nicefrac}       % compact symbols for 1/2, etc.
\usepackage{microtype}      % microtypography
\usepackage{xcolor}         % colors
\usepackage{caption}
\usepackage{subcaption}
\usepackage{enumitem}
%% Some suggested packages, as needed:
\usepackage{natbib} % has a nice set of citation styles and commands
\usepackage{mathtools} % amsmath with fixes and additions
\usepackage{booktabs} % commands to create good-looking tables
\usepackage{tikz} % nice language for creating drawings and diagrams
\usepackage{amsthm}
\DeclareMathOperator*{\argmin}{arg\,min}
\usepackage{comment}
\usepackage{url}
\urlstyle{same}
\usepackage{kantlipsum}
\allowdisplaybreaks

 % just an

\newtheorem{remark}{Remark}
\newtheorem{definition}{Definition}
\newtheorem{theorem}{Theorem}
\newtheorem{lemma}{Lemma}

\makeatletter
\newcommand\nocaption{%
    \renewcommand\p@subfigure{}
    \renewcommand\p@subtable{}
    \renewcommand\thesubfigure{\thefigure\alph{subfigure}}
    \renewcommand\thesubtable{\thetable\alph{subtable}}
}

%% Rights management information.  This information is sent to you
%% when you complete the rights form.  These commands have SAMPLE
%% values in them; it is your responsibility as an author to replace
%% the commands and values with those provided to you when you
%% complete the rights form.
\copyrightyear{2023} 
\acmYear{2023} 
\setcopyright{acmlicensed}\acmConference[FAccT '23]{2023 ACM Conference on Fairness, Accountability, and Transparency}{June 12--15, 2023}{Chicago, IL, USA}
\acmBooktitle{2023 ACM Conference on Fairness, Accountability, and Transparency (FAccT '23), June 12--15, 2023, Chicago, IL, USA}
\acmPrice{15.00}
\acmDOI{10.1145/3593013.3594117}
\acmISBN{979-8-4007-0192-4/23/06}

%% These commands are for a PROCEEDINGS abstract or paper.
% \acmConference[FAccT '23]{FAccT '23: Conference on Fairness, Accountability and Transparency (ACM FAccT)}{June 12--15, 2023}{Chicago, IL, USA}
% \acmBooktitle{FAccT '23: ACM Conference on Fairness, Accountability and Transparency, June 12--15, 2023, Chicago, IL, USA}

%%
%% Submission ID.
%% Use this when submitting an article to a sponsored event. You'll
%% receive a unique submission ID from the organizers
%% of the event, and this ID should be used as the parameter to this command.
%%\acmSubmissionID{123-A56-BU3}

%%
%% For managing citations, it is recommended to use bibliography
%% files in BibTeX format.
%%
%% You can then either use BibTeX with the ACM-Reference-Format style,
%% or BibLaTeX with the acmnumeric or acmauthoryear sytles, that include
%% support for advanced citation of software artefact from the
%% biblatex-software package, also separately available on CTAN.
%%
%% Look at the sample-*-biblatex.tex files for templates showcasing
%% the biblatex styles.
%%

%%
%% The majority of ACM publications use numbered citations and
%% references.  The command \citestyle{authoryear} switches to the
%% "author year" style.
%%
%% If you are preparing content for an event
%% sponsored by ACM SIGGRAPH, you must use the "author year" style of
%% citations and references.
%% Uncommenting
%% the next command will enable that style.
%%\citestyle{acmauthoryear}

%%
%% end of the preamble, start of the body of the document source.
\begin{document}

%%
%% The "title" command has an optional parameter,
%% allowing the author to define a "short title" to be used in page headers.
\title{Detection and Mitigation of Algorithmic Bias via Predictive Parity}

%%
%% The "author" command and its associated commands are used to define
%% the authors and their affiliations.
%% Of note is the shared affiliation of the first two authors, and the
%% "authornote" and "authornotemark" commands
%% used to denote shared contribution to the research.

\author{Cyrus DiCiccio}
\email{cjd48@cornell.edu}
\affiliation{%
  \country{USA}
}

\author{Brian Hsu}
\email{bhsu@linkedin.com}
\affiliation{%
  \institution{LinkedIn}
  \country{USA}
}

\author{Yinyin Yu}
\email{yinyyu@linkedin.com}
\affiliation{%
  \institution{LinkedIn}
  \country{USA}
}

\author{Preetam Nandy}
\email{preetamnandy@gmail.com}
\affiliation{%
  \country{Switzerland}
}

\author{Kinjal Basu}
\email{basukinjal@gmail.com}
\affiliation{%
  \country{USA}
}

%%
%% The abstract is a short summary of the work to be presented in the
%% article.
\begin{abstract}  
Predictive parity (PP), also known as sufficiency, is a core definition of algorithmic fairness essentially stating that model outputs must have the same interpretation of expected outcomes regardless of group. Testing and satisfying PP is especially important in many settings where model scores are interpreted by humans or directly provide access to opportunity, such as healthcare or banking. Solutions for PP violations have primarily been studied through the lens of model calibration. However, we find that existing calibration-based tests and mitigation methods are designed for independent data, which is often not assumable in large-scale applications such as social media or medical testing. In this work, we address this issue by developing a statistically rigorous non-parametric regression based test for PP with dependent observations. We then apply our test to illustrate that PP testing can significantly vary under the two assumptions. Lastly, we provide a mitigation solution to provide a minimally-biased post-processing transformation function to achieve PP. 
\end{abstract}

%%
%% The code below is generated by the tool at http://dl.acm.org/ccs.cfm.
%% Please copy and paste the code instead of the example below.
%%
% \begin{CCSXML}
% <ccs2012>
%  <concept>
%   <concept_id>10010520.10010553.10010562</concept_id>
%   <concept_desc>Computer systems organization~Embedded systems</concept_desc>
%   <concept_significance>500</concept_significance>
%  </concept>
%  <concept>
%   <concept_id>10010520.10010575.10010755</concept_id>
%   <concept_desc>Computer systems organization~Redundancy</concept_desc>
%   <concept_significance>300</concept_significance>
%  </concept>
%  <concept>
%   <concept_id>10010520.10010553.10010554</concept_id>
%   <concept_desc>Computer systems organization~Robotics</concept_desc>
%   <concept_significance>100</concept_significance>
%  </concept>
%  <concept>
%   <concept_id>10003033.10003083.10003095</concept_id>
%   <concept_desc>Networks~Network reliability</concept_desc>
%   <concept_significance>100</concept_significance>
%  </concept>
% </ccs2012>
% \end{CCSXML}

% \ccsdesc[500]{Computer systems organization~Embedded systems}
% \ccsdesc[300]{Computer systems organization~Redundancy}
% \ccsdesc{Computer systems organization~Robotics}
% \ccsdesc[100]{Networks~Network reliability}

%%
%% Keywords. The author(s) should pick words that accurately describe
%% the work being presented. Separate the keywords with commas.
\keywords{algorithmic fairness, dependent data, testing for fairness, mitigation of bias}

%\received{20 February 2007}
%%
%% This command processes the author and affiliation and title
%% information and builds the first part of the formatted document.
\maketitle

\section{Introduction}
%\vspace{-0.2cm}
Fairness in artificial intelligence (AI) has been receiving much attention in recent years due to its tremendous impact on our everyday lives. Examples of AI-driven systems influencing our lives include social media \citep{garg2020social}, digital assistance \citep{maedche2019ai}, web searching \citep{zhang2018towards}, online and retail shopping \citep{pillai2020shopping}, healthcare \citep{panesar2019machine}, banking \citep{caron2019transformative}, and many more. Undoubtedly, as practitioners and researchers, it is our responsibility to make sure that the AI systems are fairly treating all individuals while making critical decisions such as the treatment plan of a patient \citep{ahmad2020fairness}, approval of the loan to a potential borrower \citep{mehrabi2021survey}, etc. An AI system can be unfair to one or more groups of individuals due to data bias that reflects biases present in society or the algorithmic bias resulting from the imperfection of AI systems in representing underlying data generating mechanisms. There are numerous definitions of group-fairness available in the literature including Demographic Parity, Equality of Opportunity, Equalized Odds and predictive parity \citep{verma2018fairness,barocas-hardt-narayanan}. This paper focuses on the detection and mitigation of algorithmic bias defined through predictive parity for large-scale AI systems.  

For binary classifiers, the predictive parity condition ensures that the observed outcome is conditionally independent of the protected attribute under consideration (e.g., race or gender) given the predicted outcome. In other words, we should observe the same (distribution of) outcome for the same prediction regardless of the protected attribute status. For example, an AI system predicting the chance of developing heart disease should have equal precision irrespective of the gender of a person. In the banking example, customers who got rejected for a loan should have an equal chance of being a defaulter regardless of their race.  In the binary classification setting, this condition is equivalent to equality of the means of the observed outcomes conditional on model score across groups, which naturally extends to the non-binary label setting. This condition is closely tied to the definition of sufficiency in the fairness literature \citep{barocas-hardt-narayanan}.  A closely related notion to predictive parity is that of marginal outcome testing \cite{ayres02}.  This line of work demonstrates that testing for equal outcomes in binary classification should be performed at the marginal threshold to avoid issues of ``infra-marginality'' encountered by methods aggregating all outcomes above the threshold.

\begin{comment}
{\color{red}
{\bf: TODO: Expand on ``our contributions''}
\begin{itemize}
\item Formal testing in dependence,
\item Distinguish between predictive rate parity averaged over all users vs over all observations. 
\item Define predictive rate parity for multi-objective situations, and propose a mitigation mechanism.  To motivate, cite Tech companies MOO models.  E.g. Pinterest https://medium.com/pinterest-engineering/multi-task-learning-for-related-products-recommendations-at-pinterest-62684f631c12j, LinkedIn https://engineering.linkedin.com/blog/2021/homepage-feed-multi-task-learning-using-tensorflow, Facebook, https://engineering.fb.com/2021/01/26/ml-applications/news-feed-ranking/, LinkedIn
\item Testing/mitigation for infra-marginality. (Sam Corbett-Davies provides methods of testing, however, his methodology infers thresholds.  We provide minimally biased estimates of infra-marginal outcomes in cases where the thresholds and scores are explicitly known).
\item propose non-parametric regression as a mitigation approach, show in simulations that it performs as well or better than existing methods.  
\end{itemize}
}
\end{comment}

Although the definition of predictive parity is well understood, to the best of our knowledge, there does not exist a formal statistical test of predictive parity in the presence of dependent data. Most large-scale AI applications contains multiple records from the same individual or user, and as a result, a natural dependency structure exists in the training data. In this paper, we propose statistically rigorous non-parametric regression based tests for predictive parity in the presence of dependent observations. We illustrate how Nadaraya-Watson (non-parametric) regression estimators can be used for testing, extend traditional asymptotic results for these estimators to handle the dependence, and demonstrate how these estimators can be used for post processing to ensure predictive parity across groups. 
%We also introduce two variants of predictive rate parity which can yield differing results in the dependent observation setting. Next, we propose a mitigation approach based on these non-parametric estimators. We show how they can be used to provide minimally-biased post-processing transformations to achieve predictive rate parity across groups. 

Most of the fairness literature focuses on detection and mitigation of algorithmic bias in a single machine learning model. However, many large-scale industry applications uses a combination of models that are predicting different objectives and are ultimately combined through weighted summations to give a single score (such as those described in \cite{lifeed, ramanath2021efficient, chen2020,lada2021}). In this paper, we also extend our methods to test and mitigate for predictive parity in such multi-objective models. We devise methodology to simultaneously achieve predictive parity on each of the individual objective models as well as the overall model, which to our knowledge has not been previously addressed.

%\kinjal{Remove or mention discussed in appendix? Cyrus: Mentioned this appears in the supplement in the organization paragraph below. Thoughts? } 

We also suggest approaches to test and mitigate discrepancies in marginal outcomes for binary classification problems which is closely connected to equality of conditional expected outcome.  Note that assessing disparity in marginal outcomes has been addressed by \cite{simoiu2017}, but this work requires the marginal threshold of a classifier be inferred. %For sake of brevity, we push this discussion to the Appendix. % \ref{sec:infra-marginality}. 
To summarize, we make three main contributions in this work: 
\begin{enumerate}[noitemsep,leftmargin=*]
    \item {\bf Testing for fairness in the presence of dependent data:} Collecting user data from engagement with machine learning systems inevitably leads to dependent observations in training data.  While it is standard practice in experimentation to account for dependencies in outcome variables, this issue has largely been ignored in fairness literature.  In this paper, we derive a formal statistical test for predictive parity which is valid under certain forms of dependence that are consistent with the practices seen in experimentation (e.g. as described in \cite{deng18}).  We also show that the computations used in testing can be re-used for post-processing mitigation.%We derive a formal statistical test for predictive rate parity in the presence of dependent data. 
    \item {\bf Predictive parity for multi-objective models:} Most of the fairness literature centers around single models predicting a single outcome (e.g. a logistic regression model predicting the likelihood of a binary outcome).  Existing methodology for calibration is insufficient to ensure fairness in multi-objective models, which are commonly used in large-scale applications.  In these cases, calibrating individual objective models does not guarantee fairness of the overall model.  We extend the framework of detection and mitigation of algorithmic bias from a single machine learning model to weighted combinations of models that is common practice in multi-objective settings.%We theoretically derive the asymptotic properties of the Nadaraya-Watson regression estimators that can be used for both testing and post-processing mitigation.
    \item {\bf Connections to marginal outcome fairness for binary classification models:} We draw parallels between existing notions of marginal outcome fairness and calibration for classification algorithms.  In this context, we develop a rigorous approach to testing for fairness and propose a framework for mitigation.  %We extend the framework of detection and mitigation of algorithmic bias from a single machine learning model to a weighted combination that is common in most large-scale applications.
\end{enumerate}

%To show the efficacy of our framework we show various experiments across a large set of real-world datasets.  

The remainder of the paper is organized as follows. In Section \ref{sec:problem_statement} we provide a problem statement and precise formulation of predictive parity. In Section \ref{sec:testing} and \ref{sec:mitigation} we develop the formal testing and mitigation methodology respectively. We extend these methods to multi-objective models in Section \ref{sec:multi_objective} and draw connection to marginal outcome fairness in Section \ref{sec:infra-marginality}. Section \ref{sec:impact} discusses some of the metrics through which we can quantify the impact caused by deviating from predictive parity.
Finally, we present the experimental results in Section \ref{sec:empirical} before concluding with a discussion in Section \ref{sec:discussion}.  The supplement contains supporting examples, discussion, and proofs of the claims presented in the main text. We end this section with a brief overview of the related literature.

%\vspace{-0.1in}
%\subsection{Related Work}
%\vspace{-0.2cm}
\textbf{Related Work:} While there are many definitions of fairness \cite{barocas-hardt-narayanan} which are often conflicting, we focus on predictive parity, which has also been formulated in the fairness literature as equal model calibration across groups \cite{Chouldechova17}. Calibration in predictive modeling has been well studied, see \citet{Kumar2019} and references therein for applications of calibration techniques in and outside of machine learning. In fairness, lack of equal calibration among groups is an issue in many contexts including criminal justice \cite{kleinbergFairnessTradeoff} and healthcare \cite{ObermeyerCalibrationInHealthcare}.

\begin{comment}
Calibration in predictive modeling has been well studied, see \cite{Kumar2019} and references for applications of calibration techniques in and outside of machine learning. In fairness, lack of equal calibration among groups is an issue in many contexts including criminal justice (\cite{kleinbergFairnessTradeoff}) and healthcare (\cite{ObermeyerCalibrationInHealthcare}). In such settings, equal calibration is typically referred to as predictive rate parity (\cite{Chouldechova17}) and while there are many definitions of fairness (\cite{barocas-hardt-narayanan}), we focus on predictive rate parity for its applicability in settings where it is desirable to interpret scores as probabilities which can then be used for ranking/prioritizing or other downstream tasks. 
\end{comment}
There have been multiple approaches to yield better calibrated models, \citet{HuangTutoricalCalibration} provides an overview of some such techniques, including  Platt scaling \cite{platt1999probabilistic}, Isotonic Regression \cite{zadrozny2002transforming}, and Bayesian Binning \cite{naeini2015obtaining}. More recent calibration methods include \citet{HebertJohnsonMulticalibration}, which focuses on subgroup calibration. Any post-processing calibration mentioned above can be used to achieve predictive parity by individually calibrating for protected groups. However, beyond applying a calibrator onto a model, there is also a question regarding the quality of calibration, which is the focus of this paper. Better methods of calibration testing have been researched in \citet{Kumar2019} and \citet{WidmannCalibrationTests}. These tests aim to assess the overall quality of calibration. Recently, \citet{tygert2022calibration} has proposed methods for assessing calibration differences between a subgroup and the full population. Our work is similar in that it also aims to study a similar issue of calibration differences across subgroups. However, the methodology we propose additionally accounts for testing with dependency between samples, which to our knowledge has not been studied before but is highly practical as repeat users appear in many large-scale internet applications.

%\vspace{-0.1in}
\section{Problem Statement}
\label{sec:problem_statement}
%\vspace{-0.2cm}
Assume we observe data as $(Y,G,S)$, where $Y$ is an outcome being predicted by a machine learning model, $G$ is a group membership, and $S$ is a model score.  Note that $Y$ is often binary, but need not be for the remaining discussion.  Throughout, we assume that group membership is categorical.  We say that a score based machine learning model satisfies predictive parity when the following definition holds. 
\begin{definition}
A score based machine learning model satisfies \textbf{(conditional) predictive parity} for groups $g_1$ and $g_2$ if for each score $s$, $E(Y | G = g_1, S = s) = E(Y | G = g_2, S = s).$
\end{definition}
For a binary classifier, this condition asserts that for each possible value of the model score, the proportion of positive outcomes should be identical between the two groups which holds if, and only if, the outcome is independent of the group label when conditioned on the score.  

%\vspace{-0.1in}
%\subsection{Technical details}
%\vspace{-0.2cm}
The definition
$
E(Y | G = g_1, S = s) = E(Y | G = g_2, S = s)$,
for all $s$ is a slight abuse of notation, particularly when conditioning on sets of measure zero. For this to be meaningful, it requires that by group, the scores and labels have a bivariate density that is nonzero on the same ranges for each attribute group, which may not always be the case.  A more precise formulation is to require that the random variables $E(Y | G = g_1, S)$ and $E(Y | G = g_2, S)$ defined on an underlying probability space $\Omega$ are equal {\it almost surely}. Note that the conditional expectations are not uniquely defined, and any random variable $Z$ that equals $E(Y | G = g_1, S)$ almost surely, is said to be a {\it version} of the conditional expectation (see Section 5.1 of \cite{durrett2019}).  The precise statement for predictive parity becomes, there exists versions of $E(Y | G = g_1, S)$ and $E(Y | G = g_2, S)$ which are equal almost surely.

However, it appears that this condition is not sufficient to ensure fairness. For instance, suppose a lending algorithm predicting the likelihood of default on a loan always scores men in the range $[0,1/2)$ and women in the range of $[1/2,1)$.  Because the scores are non-overlapping, predictive parity vacuously holds irrespective of the actual default rates, yet such an algorithm is clearly not fair in the intended sense.  Heuristically (and again with an abuse of notation), we want that  $E(Y | G = g_k, S = s) > E(Y | G = g_{k'}, S = s')$ whenever $s>s'$ and $k \neq k'$ to rule out the unpleasant behavior seen in the example.  To make this statement rigorous, let $O_g$ denote a version of conditional expectation, and let $S_g$ be a random variable distributed according to the score distribution of group $g$. 

%\[
%B(g_i) = \left\{ (g_i, s): \text{ there exists an } s' > s \text{ and } i' \neq i %\text{ such that } E(Y_i | G_i = g_k, S_i = s) > E(Y_i | G_i = g_{k'}, S_i = s') %\right\}
%\]
Define
\begin{align*}
B(O_g, &O_{g'}) = \left\{ \omega \in \Omega :  O_g(\omega) > O_{g'}(\omega) \text{ and } S_g(\omega) < S_{g'}(\omega) \right\}
\end{align*}
to be the set on which scores are not fairly ordered with respect to score.  We are now ready to define predictive parity precisely. 

\begin{definition}\label{def:pred_rate_parity} A score based model satisfies {\bf (conditional) predictive parity} if $O_g$ equals $O_{g'}$ almost surely, and $P\left(B(O_g, O_{g'})\right)=0$ for any $O_g$ equals $O_{g'}$ which are versions of $E(Y | G = g, S)$ and $E(Y | G = g', S)$.  
\end{definition}

The mitigation approaches proposed in this work satisfy this formulation of predictive parity.

%\vspace{-0.1in}
\section{Testing for predictive parity} 
\label{sec:testing}
%\vspace{-0.2cm}
%A natural estimate of $E(Y_i | G_i = g_1, S_i = s)$ is to apply a Kernel density %regression of the binary outcomes on the scores within each group.  WIP: look into %literature, see what results there are on simultaneous confidence bounds for these %curves which could give a test for the difference of outcomes.  Also, dependence in %rankings may be challenging to address, e.g. if a member initiates multiple queries.  
Testing for predictive parity requires estimation of the conditional expected outcomes, $E(Y_i | G_i = g_1, S_i = s)$.  A natural estimator of these quantities is the {\it Nadaraya-Watson} estimator \cite{nadaraya64}, which is a class of non-parametric regression estimators.  Assume i.i.d. pairs $(S_i,Y_i)$ of model score and outcome label are observed.  These can be written as $Y_i = f(S_i) + \epsilon_i$, where $f(S_i) = E(Y_i | S_i)$, and $E(\epsilon_i | S_i) = 0$.  The Nadaraya-Watson estimator is
\[
\hat f(s) = \frac{\sum^n_{i=i} Y_i K_h(S_i - s)}{\sum^n_{i=i} K_h(S_i - s)}
\]
for a kernel $K_h(\cdot)$ having bandwidth parameter $h$.  We will further assume that the kernel takes the form $K_h(x) = K(x/h)$.  A common choice of kernel is the Gaussian kernel which specifies $K_h(x) = \exp \left(- x^2/(2h^2) \right).$  Note that for a fixed ``bandwidth'' parameter, $h$, the asymptotics of this estimator follow from a straight-forward application of the delta method (see Theorem 5.5.24 of \cite{casella2002}).  However, when the bandwidth is non-vanishing, the estimator is biased, and will not converge to $f(s)$.  In particular, we can decompose the usual pivotal quantity 
\[
\frac{\hat f (s) - f(s)}{\text{var} (\hat f(s))} = \frac{\hat f (s) - E(\hat f (s))}{\text{var} (\hat f(s))} + \frac{E(\hat f (s)) - f(s)}{\text{var} (\hat f (s))}.
\]

As such, the bandwidth needs to be tending to zero at a suitable rate, otherwise bias will be asymptotically non-zero and the term
$
\frac{E(\hat f (s)) - f(s)}{\text{var} (\hat f(s))}
$
will be non-vanishing.  Consequently, valid inference requires the bandwidth to tend to zero at an appropriate rate.  The asymptotics for these estimators with decreasing bandwidth are well studied when the data observed is independent and identically distributed, see for example \cite{mcmurry2008}.  However, we will derive the asymptotic distributions under certain types of dependence that are common to machine learning applications, typically arising from repeated observations from the same user of the platform.  This formulation for dependence, which is described in detail by \citet{deng18}, is standard practice in experimentation platforms used by technology companies.  

An appropriate model for such dependent data is that we observe tuplets $(Y_{m,i}, S_{m,i}, G_m)$ for $m = 1,...,M$, and $i = 1,...,n_i$.  Here, $m$ denotes the user (or individual), and $i$ denotes instances at which the user engages with the system.  We will assume that the $(Y_{m,i}, S_{m,i}, G_m)$ are identically distributed and that the observations are independent across users, but that $(Y_{m,i}, S_{m,i}, G_m)$ and $(Y_{m,i'}, S_{m,i'}, G_m)$ can be dependent.  For ease of exposition, we assume the observations for a given user are exchangeable.  In this setting, predictive parity can be defined as follows.

\begin{definition}\label{def:pred_rate_parity_dependent} A score based model satisfies {\bf (conditional) predictive parity} if $O_g$ equals $O_{g'}$ almost surely, and $P\left(B(O_g, O_{g'})\right)=0$ for any $O_g$ equals $O_{g'}$ which are versions of $E(Y_{1,m} | G_m = g_1, S_{1,m})$ and $E(Y_{1,m} | G_m = g_2, S_{1,m})$.  
\end{definition}

\begin{remark}
There are alternative formulations of predictive parity in the presence of dependent data, which are discussed in Section \ref{sec:alt_prp} of the appendix.
\end{remark}

 Evaluating this notion of predictive parity requires an estimator of the quantities $f_g(s) = E(Y_{m,1} | S_{m,1} = s, G_{m} = g)$.  We will use the Nadaraya-Watson style estimator
\[
\hat f_g(s) = \frac{\frac{1}{M}\sum_{m:G_m = g}\frac{1}{n_m}\sum_{i = 1}^{n_m} Y_{m,i} K((S_{m,i} - s)/h)}{\frac{1}{M} \sum_{m:G_m = g}\frac{1}{n_m} \sum_{i = 1}^{n_m} K((S_{m,i} - s)/h)}.
\]

We will provide a Theorem establishing the asymptotic normality of the Nadaraya-Watson estimator which characterizes the appropriate rate of convergence of the bandwidth to zero.

\subsection{Asymptotic Normality of the Nadaraya-Watson Estimator}

Before stating the result on the asymptotic normality, we will establish the requisite notation and state a sufficient set of assumptions on the underlying data generating mechanism and the choice of kernel.  

\subsubsection{Notation}  
Let $f_{S|g}(s)$, denote the density of $S_{1,1}$ conditional on $G=g$, and $f_{Y,S|g}(y,s)$ denote the joint density of $Y_{1,1}$ and $S_{1,1}$ conditional on $G=g$.  
%The numerator of the Nadaraya-Watson estimator estimates $f_{Y,S|g}(y,s)$ and the denominator estimates $f_{S|g}(s)$. 
%\kinjal{can we add a text descriptions of these terms? CYRUS TODO:  Is this change enough???}
Note that $f_g(s) = a_g(s) / b_g(s)$ where 
\begin{equation}
a_g(s)  = \int_{-\infty}^\infty y f_{Y,S|g}(y,s) d y \; \; \; \; \text{and} \; \; \; \; b_g(s) = \int_{-\infty}^\infty f_{Y,S|g}(y,s) d y~.
\end{equation}
Moreover, let $Y_{m,i,g} = Y_{m,i}\cdot I \left\{ G_{m,i} = g \right\}$ and its corresponding moments be,
\begin{align*}
f_g(S_{n,m};n_m) &= E \left( \left. Y_{m,i,g}\right|S_{n,m},n_m \right), \nonumber\\
\sigma^2_g(s;n_m) &= \text{Var} \left( \left. Y_{m,i,g}\right|S_{m, i, g} = s,n_m \right),\\
\rho_g(s_1, s_2;n_m) &= \text{Cov} \left( \left. Y_{m,i,g}, Y'_{m,j,g}\right|S_{m, i, g} = s_1, S_{m, j, g} = s_2,n_m \right).\nonumber
\end{align*}
The asymptotic distribution of the Nadaraya-Watson estimator additionally depends on, 
\begin{align}
\label{notation:fgsigma2}
\overline{f^2_g}(s) = E\left( \frac{1}{n_m}  f^2_g(s; n_m) \right)\; \text{ and }\;\; \overline{\sigma^2_g}(s) = E\left( \frac{1}{n_m} \sigma^2_g(s ;n_m) \right),    
\end{align}
and a few other terms which we push to the appendix for brevity.

% \begin{multline}
% Y'_{m,i,g} = Y_{m,i}\cdot I \left\{ G_{m,i} = g \right\},
% \sigma^2_g(s;n_m) = \text{var} \left( \left. Y'_{m,i,g}\right|S_{m, i, g} = s,n_m \right),\\
% \rho_g(s_1, s_2;n_m) = \text{cov} \left( \left. Y'_{m,i,g}, Y'_{m,j,g}\right|S_{m, i, g} = s_1, S_{m, j, g} = s_2,n_m \right), \\
% \text{ and }f_g(S_{n,m};n_m) = E \left( \left. Y'_{m,i,g}\right|S_{n,m},n_m \right)~.
% \end{multline}

\subsubsection{Assumptions}
\begin{enumerate}
    \item[A1] There exist constants $C$ and $\delta$ such that for all $s$, $$E\left( \left. |Y|^{2+\delta} \right| S = s \right) < C$$
    \item[A2] $\overline{\sigma^2_g}(s)$ and $\overline{f^2_g}(s)$  defined in \eqref{notation:fgsigma2} are continuous and bounded below by some constant $b>0$.
    \item[A3] $b_g(\cdot)$ has $d_1$ continuous derivatives and $f_g(\cdot)$ has $d_2$ continuous derivatives in a neighborhood of $s$.
    \item[A4] The observations for a given user are exchangeable given the number of observations generated by that user.  That is, for each $m$, $\left\{ (Y_{m,1}, S_{m,1}),..., (Y_{m,n_m}, S_{m,n_m}) \right\}$ is distributed as $\left\{ (Y_{m,\pi(1)}, S_{m,\pi(1)}),..., (Y_{m,\pi(n_m)}, S_{m,\pi(n_m)}) \right\}$ given $n_m$, where $\pi (\cdot)$ is a randomly chosen permutation of $\left\{ 1,...,n_m \right\}$.
    \item[A5] $(Y_{m,i} , S_{m,i})$ is independent of $(Y_{m',j} , S_{m',j})$ for $m \neq m'$.
    \item[B1] As $M \rightarrow \infty$, $h \rightarrow 0$ in such a way that $M \cdot h \rightarrow \infty$.
    \item[B2] $K(\cdot)$ integrates to one, the first $d_k \geq \max \left\{ d_1, d_2 \right\}$  moments of $K(\cdot)$ are zero, and $K(\cdot)$ has tails decaying faster than $x^{-m}$ for any $m$.
\end{enumerate}

Assumptions (A1)-(A3) are mild regularity conditions on the smoothness of the conditional expected outcomes, and variability of the outcome labels required for asymptotic normality.  Assumption (A4) is made for notational compactness (modifications when these conditions are not met required are noted in the proofs).  A similar result holds when the data are not exhangeable and the score and labels are allowed to have arbitrary dependence, with the estimand and asymptotic variance modified to depend on the full joint distribution of the observed data.  Assumption (A5) asserts that the scores and outcomes for two users are independent, which is akin to the standard iid assumptions in non-repeated measurements results.  Assumptions (B1) and (B2) pertain to the user driven choice of kernel and bandwidth.  Provided the quantities appearing in condition (A3) have at least one continuous derivative, many common choices including the Gaussian and Epanechnikov kernels satisfy (B2).  Appropriate choice of $h$ satisfying (B1) is characterized in the following Theorem. 

\begin{theorem}\label{thm:asymptotic_normality}
Under the Assumptions (A1)-(A5), (B1) and (B2), there exists a function $\sigma^2_g (\cdot)$ such that 
\[
\sqrt{Mh} \left( \hat f_g(s) - f_g(s) + o(h^d) \right) \rightarrow N\left( 0, \sigma^2_g (s) \right)
\]
where $d = \min \left\{ d_1,d_2 \right\}$ (defined in the Appendix).  In particular, choosing $h = O(M^{-1/(2d + 1)})$,
we have that 
\[
\sqrt{Mh} \left( \hat f_g(s) - f_g(s) \right) \rightarrow N\left( 0, \sigma^2_g (s) \right).
\]
\end{theorem}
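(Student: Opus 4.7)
The plan is to write the estimator as the ratio $\hat f_g(s) = \hat p_g(s)/\hat q_g(s)$ where $\hat p_g(s) = \frac{1}{M}\sum_{m:G_m=g} P_m$ with $P_m = \frac{1}{n_m}\sum_{i=1}^{n_m} Y_{m,i} K((S_{m,i}-s)/h)$ and $\hat q_g(s) = \frac{1}{M}\sum_{m:G_m=g} Q_m$ defined analogously without the $Y$ factor. The key observation is that although $(Y_{m,i},S_{m,i})$ are dependent across $i$ within a user, the user-level summaries $(P_m,Q_m)$ are i.i.d.\ across $m$, so an ordinary i.i.d.\ CLT applies to their averages. Following the template of \cite{mcmurry2008} but with user-level random variables replacing observation-level ones, the argument separates into a bias analysis for $E[\hat p_g(s)]$ and $E[\hat q_g(s)]$, a variance analysis using i.i.d.\ across $m$ together with the within-user covariance structure, and a final delta-method step for the ratio.

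For the bias, I would use identical distribution within a user to reduce $E[P_m] = E[Y_{m,1}K_h(S_{m,1}-s)]$, change variables $u=(S_{m,1}-s)/h$, and Taylor-expand the product $f_g(\cdot)p_{S\mid G=g}(\cdot)$ around $s$. Under the Appendix's smoothness conditions on $f_g$ (to order $d_1$) and on the conditional score density $p_{S\mid G=g}$ (to order $d_2$), combined with vanishing-moment properties of $K$, this yields $E[\hat p_g(s)] = f_g(s)p_{S\mid G=g}(s) + O(h^d)$ and $E[\hat q_g(s)] = p_{S\mid G=g}(s) + O(h^d)$, so that the ratio of means differs from $f_g(s)$ by $O(h^d)$, which becomes the $o(h^d)$ term (or the negligible bias after rescaling) in the theorem.

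For the variance, the i.i.d.\ structure across $m$ gives $\mathrm{Var}(\hat p_g(s)) = \mathrm{Var}(P_m)/M$, and I would expand $\mathrm{Var}(P_m) = n_m^{-2}\sum_{i,j} \mathrm{Cov}(Y_{m,i}K_h(S_{m,i}-s),\, Y_{m,j}K_h(S_{m,j}-s))$. The $n_m$ diagonal terms contribute the usual $O((n_m h)^{-1})$ rate via a kernel change of variables, which is the dominant term and gives $\mathrm{Var}(\hat p_g(s))$ of order $(Mh)^{-1}$. This is where I expect the main obstacle: controlling the $n_m(n_m-1)$ off-diagonal covariances, since two kernel evaluations at the same user can concentrate on a diagonal in the joint score space. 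I would invoke an Appendix assumption that the within-user joint density of $(S_{m,i},S_{m,j})$ is bounded near $(s,s)$, so each off-diagonal term is $O(1)$ rather than $O(1/h)$, and is therefore asymptotically negligible relative to the diagonal contribution.

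Given these ingredients, $\sqrt{Mh}\,(\hat p_g(s) - E[\hat p_g(s)], \hat q_g(s) - E[\hat q_g(s)])$ converges jointly to a Gaussian by the Lindeberg--Feller CLT applied to the i.i.d.\ user-level summands (a Lyapunov condition with a bounded kernel and finite moments on $Y$ is straightforward to verify). The delta method applied to the map $(p,q)\mapsto p/q$ at the deterministic limit $(f_g(s)p_{S\mid G=g}(s),\, p_{S\mid G=g}(s))$ then yields asymptotic normality with an explicit $\sigma_g^2(s)$ proportional to $\int K^2(u)\,du / p_{S\mid G=g}(s)$ times a user-level conditional variance of $Y$. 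Choosing $h = O(M^{-1/(2d+1)})$ balances squared bias and variance so that $\sqrt{Mh}\cdot O(h^d)\to 0$, absorbing the bias term and leaving the stated limit.
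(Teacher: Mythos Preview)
Your proposal is correct and follows essentially the same route as the paper: write the estimator as a ratio of kernel averages, establish bias via Taylor expansion and the vanishing-moment property of the kernel, establish variance rates at the user level, apply a Lyapunov/Lindeberg CLT to the i.i.d.\ user summaries, and finish with a delta-method (the paper uses an equivalent intermediate-value-theorem linearization) on $(p,q)\mapsto p/q$.

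The one place you diverge is in handling the within-user cross terms. You anticipate having to control $\mathrm{Cov}\big(Y_{m,i}K_h(S_{m,i}-s),\,Y_{m,j}K_h(S_{m,j}-s)\big)$ for $i\neq j$ via a bounded joint density of $(S_{m,i},S_{m,j})$ near $(s,s)$, so that each cross term is $O(1)$ and negligible next to the $O(1/h)$ diagonal. The paper instead sidesteps this entirely by its Assumption~2, which takes the $(Y_{m,i},S_{m,i})$ to be i.i.d.\ \emph{conditional on $n_m$}; the cross terms then vanish after conditioning, and the dependence enters only through the fact that $E(Y_{m,i}\mid S_{m,i},n_m)$ may vary with $n_m$. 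The paper remarks that the argument extends to nonzero within-user covariances, which is precisely the scenario your joint-density bound would cover. So your version is slightly more general at that step, at the cost of an extra regularity hypothesis, while the paper's version is cleaner under its stated assumptions. A minor bookkeeping point: because the sums are over $\{m:G_m=g\}$ but normalized by $M$, both numerator and denominator pick up a factor $P(G_m=g)$ in expectation; this cancels in the ratio but appears in the limiting variance, so keep track of it when writing down $\sigma_g^2(s)$.
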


Testing for predictive parity can be accomplished by testing equality of the conditional expectations at a fixed range of score values and simply applying a Bonferroni (or any other family-wise error rate controlling \cite{benjamini1995controlling}) correction to the $p$-values computed using the result of Theorem \ref{thm:asymptotic_normality}.  With additional effort, methods such as those in \citet{luedtke2019} or \citet{srihera2010} for testing equality of unknown functions can be modified to account for dependence and provide higher-powered tests for predictive parity.

%%\vspace{-0.1in}
\section{Mitigation Approaches}
\label{sec:mitigation}
%%\vspace{-0.2cm}
Writing $f_g(s) = E(Y|S=s, G = g)$, we could, for members of group $g_1$, replace a score $s$ with the score $\tilde s$ satisfying $f_{g_1}(\tilde s) = f_{g_2}( s)$, or equivalently, transform the scores of both groups such that $f_{g_1}(\tilde s) = f_{g_2}( \tilde s)$.  Any such transformation would satisfy predictive parity.  Arguably the simplest, and most intuitive approach is to ``calibrate'' the scores, i.e. to apply a transformation $\tilde S = t(S,G)$ such that 
%\todo{Some notational inconsistency with $g_i$ since $i$ was used as the index of user-level activity. Change this to $g_k$ or something? REPLY: Great point, notation was poorly chosen. I changed to just indexing by $g$ throughout, which I think fixes this.  If you get a chance, could you please take a quick look to see if it is consistent throughout.}
\[
E(Y | \tilde S = s, G = g) = s
\]
for all $g$ and $s$.  While there are numerous such transformations, the ideal transformation in the mean-squared error sense is
\[
E(Y|S, G) = \argmin_f E \left( Y - f(G, S) \right)^2.
\]
\begin{remark}
Note that after calibrating model scores for each group under consideration, the scores may no longer have common support across groups, motivating the need for Definition \ref{def:pred_rate_parity}.  
\end{remark}
There are numerous approaches for achieving calibration including isotonic regression, Platt's scaling, and binning.  Platt's scaling and isotonic regression have been demonstrated to have good empirical properties, though may fail to asymptotically yield the ideal transformation in cases where the requisite model assumptions are not met.  Binning can introduce unintended bias. We thus suggest Nadaraya-Watson estimators as an alternative to these methods in the next section.

%%\vspace{-0.1in}
\subsection{Non-Parametric Calibration}
%%\vspace{-0.2cm}
It was seen in Section \ref{sec:testing} that the Nadaraya-Watson estimator provides a consistent estimate of the expected outcome at a given score for suitably chosen kernels and bandwidth parameters.  Let $\hat f_g (\cdot)$ be such an estimator for group $g$.  Replacing the score with the estimated conditional expected outcome given the original score (asymptotically) calibrates the model.  Of course, in real-world applications, one can evaluate the conditional expectation at a fixed set of score values and linearly interpolate between these values as follows.  Consider a binning (or partition) $B_1,...,B_K$ of the score space with $B_i = [l_i, u_i)$.  For a given score, denote by $B(s)$ the bin $B_i$ containing $s$, i.e. for which $s \in B_i$. A simple linear interpolation score transformation is given by
\[
t(s,g) = \frac{\hat f_g(u_k) - \hat f_g(l_k)}{u_k - l_k} \cdot \left( s - l_k \right) + \hat f_g(l_k)
\]
where $k$ is the index such that $s \in B(k)$.  This transformation is asymptotically unbiased at the endpoints of the bins in the sense that the transformation converges to the ``ideal'' transformation, i.e. $t(s,g) \rightarrow f_g(s)$ almost surely for scores on the endpoints of the bins.  Provided the binning is reasonably fine, the transformation will have small bias over the ideal transformation.  While this approach generally requires more data than other methods of calibration, it has several advantages.  This approach is robust against the monotonicity assumption required by isotonic regression.  This method also does not require the logistic model specification used by Platt's scaling and is applicable to non-categorical outcome data, which cannot be said of Platt's scaling.  A further discussion of these approaches is given in Appendix \ref{sec:calibration_appraoches}.

%%\vspace{-0.1in}
\section{Multi-Objective Fairness}\label{sec:multi_objective}
%%%\vspace{-0.2cm}
Many online recommendation systems including those used by large technology companies such as LinkedIn \cite{lifeed}, Meta \cite{lada2021}, and Pinterest \cite{chen2020}, combine models of various outcomes into a single score through a weighted average.  Suppose that scores $S^{(1)}_i,...,S^{(K)}_i$ predict outcomes $Y^{(1)}_i,...,Y^{(K)}_i$, and that ranking, recommendation, classification, etc. is then based on a composite score 
$S_i = \sum_{k=1}^K w_k S^{(k)}_i$
where the $w_k$ are weights chosen to balance the trade-offs between the various objectives (for instance, models of various engagement actions such as likes, clicks, or comments may be combined in this way to rank news feed items).  
In such cases, it may be of interest to not only have calibration of the individual models, but to have calibration of the composite score $S_i$ as a predictor of the composite outcome 
$Y_i = \sum_{k=1}^K w_k Y^{(k)}_i$.  In general, calibration of the individual model is not enough to guarantee calibration of the composite model; however, a slightly stronger condition is sufficient.

\begin{lemma}
\label{lemma:multi-fairness}
Suppose that $E\left( Y^{(k)}_i | S^{(1)}_i = s_1,...,S^{(K)}_i = s_K \right) = s_k$
for all $k$ and all $s_1,...,s_K$, then 
$E\left( Y_i | S_i = s\right) = s_i$.
\end{lemma}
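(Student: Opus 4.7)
The plan is a short conditioning argument using the tower property together with linearity of expectation. Note that the stated conclusion appears to contain a typo; the natural reading is $E(Y_i \mid S_i = s) = s$, which is what I would prove.

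First, I would rewrite $E(Y_i \mid S_i = s) = \sum_{k=1}^K w_k\, E(Y^k_i \mid S_i = s)$ by linearity of conditional expectation. Next, for each individual objective $k$, I would use the tower property with the strictly finer $\sigma$-algebra generated by the full score vector $(S^1_i,\ldots,S^K_i)$, which contains $\sigma(S_i)$ because $S_i = \sum_k w_k S^k_i$ is measurable with respect to it. This gives
\[
E(Y^k_i \mid S_i) \;=\; E\bigl( E(Y^k_i \mid S^1_i,\ldots,S^K_i) \,\big|\, S_i \bigr).
\]
The hypothesis of the lemma says exactly that the inner conditional expectation equals $S^k_i$ almost surely, so this reduces to $E(Y^k_i \mid S_i) = E(S^k_i \mid S_i)$.

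Combining, and pulling the sum back inside the conditional expectation,
\[
E(Y_i \mid S_i = s) \;=\; \sum_{k=1}^K w_k\, E(S^k_i \mid S_i = s) \;=\; E\!\left( \textstyle\sum_k w_k S^k_i \,\big|\, S_i = s \right) \;=\; E(S_i \mid S_i = s) \;=\; s,
\]
which is the claim.

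There is essentially no serious obstacle here: the argument is a one-line application of the tower property once the hypothesis is written in its unconditional form $E(Y^k_i \mid S^1_i,\ldots,S^K_i) = S^k_i$ a.s. The only mild subtlety is the same abuse-of-notation issue the paper raises in Section 2.1 — conditional expectations are only defined up to almost-sure equality and the score distribution must place positive mass near $s$ — but this is handled exactly as in Definition \ref{def:pred_rate_parity} by working with versions of conditional expectations and invoking the corresponding almost-sure equalities. I would include a brief remark to this effect so that the conclusion is interpreted as an almost-sure equality of the random variables $E(Y_i \mid S_i)$ and $S_i$.
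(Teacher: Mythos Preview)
Your proof is correct and follows essentially the same approach as the paper: both use the tower property to condition on the full score vector $(S^1_i,\ldots,S^K_i)$, apply the hypothesis to replace each inner conditional expectation by $S^k_i$, and then observe that $\sum_k w_k S^k_i = S_i$ collapses under the outer conditioning on $S_i = s$. Your version simply interchanges the order of the linearity and tower steps and is slightly more explicit about the $\sigma$-algebra inclusion and the almost-sure interpretation, but the argument is the same.
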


Interestingly, if the individual models are calibrated, then there can exist an individual model for which $S^{(k)}_i$ is not the best linear predictor of $Y^{(k)}_i$ given $S^{(1)}_i,...,S^{(K)}_i$.  It follows that there can be some other set of weights $\tilde w_1,...,\tilde w_K$ such that 
$
\tilde S_i = \sum_{k=1}^K \tilde w_k Y^{(k)}_i
$
is a better (in terms of mean-squared error) predictor of $Y_i$ than $S_i$.  A simple example of this is given in Appendix \ref{sec:mo_example}.

%\vspace{-0.1in}
\subsection{Multi-Objective Mitigation}
%\vspace{-0.2cm}
%\textbf{Mitigation Framework:} 
Similarly to the univariate case, mitigation in the multi-objective setting can be achieved by replacing the original scores with appropriate conditional expectations.

\begin{theorem}\label{thm:multi-fairness}
Suppose we observe scores $S_k = s_k$
Defined transformed scores as 
\[
\tilde S^{(k)} = E(Y^{(k)} | S^{(1)},...,S^{(K)}, G)
\]
for $k = 1,...,K$.  That is, replace observe scores $S^{(1)} = s_1,...,S_{(K)} = s_K$, with transformed scores $\tilde S_k = E(Y^{(k)} | S^{(1)} = s_1,...,S^{(K)}=s_K, G)$.  Then, the transformed composite score 
\[
\tilde S = \sum_k w_k \cdot \tilde S^{(k)}
\]
satisfies predictive parity for the multi-objective outcome in the sense that
\begin{align*}
E\bigg(\sum_k w_k &\cdot Y^{(k)}  | \sum_k w_k \cdot \tilde S^{(k)} = s, G = g_1 \bigg) \\
& = E\bigg(\sum_k w_k \cdot Y^{(k)} | \sum_k w_k \cdot \tilde S^{(k)} = s, G = g_2\bigg)
\end{align*}

for all $s$ and weights $w_1,...,w_K$, while also maintaining predictive parity on the individual models in the sense that 
\[
E( Y^{(k)} | \tilde S^{(k)} = s_k, G = g_1) = E( Y^{(k)} | \tilde S^{(k)} = s_k, G = g_2) 
\]
for all $s$ and $k$.
\end{theorem}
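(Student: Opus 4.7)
The plan is to reduce both assertions of the theorem to a single calibration identity for the transformed scores within each group, and then to invoke Lemma~\ref{lemma:multi-fairness} conditionally on $G$ in order to handle the composite claim. The two claims (individual-model parity and composite parity) will turn out to be nearly immediate corollaries of the tower property once this identity is in place.

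The key step is to show that, for every $k$, $E(Y^k \mid \tilde{S}^1, \ldots, \tilde{S}^K, G) = \tilde{S}^k$ almost surely. Because each $\tilde{S}^j := E(Y^j \mid S_1, \ldots, S_K, G)$ is measurable with respect to $\sigma(S_1, \ldots, S_K, G)$, the $\sigma$-algebra $\sigma(\tilde{S}^1, \ldots, \tilde{S}^K, G)$ is a sub-$\sigma$-algebra of $\sigma(S_1, \ldots, S_K, G)$. The tower property then gives $E(Y^k \mid \tilde{S}^1, \ldots, \tilde{S}^K, G) = E\bigl[\,E(Y^k \mid S_1, \ldots, S_K, G)\,\bigm|\, \tilde{S}^1, \ldots, \tilde{S}^K, G\,\bigr] = E(\tilde{S}^k \mid \tilde{S}^1, \ldots, \tilde{S}^K, G) = \tilde{S}^k$, using the fact that $\tilde{S}^k$ is already measurable with respect to the inner conditioning $\sigma$-algebra in the last equality.

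The individual-model claim then follows from a second application of the tower property: $E(Y^k \mid \tilde{S}^k, G) = E(\tilde{S}^k \mid \tilde{S}^k, G) = \tilde{S}^k$, which is a function of $\tilde{S}^k$ alone. Evaluating at $\tilde{S}^k = s_k$ forces both $E(Y^k \mid \tilde{S}^k = s_k, G = g_1)$ and $E(Y^k \mid \tilde{S}^k = s_k, G = g_2)$ to equal $s_k$. For the composite claim, I would condition on $G = g$ and apply Lemma~\ref{lemma:multi-fairness} to the transformed scores. Conditional on $G = g$, the calibration identity above gives $E(Y^k \mid \tilde{S}^1 = s_1, \ldots, \tilde{S}^K = s_K, G = g) = s_k$ for every $k$ and every $(s_1, \ldots, s_K)$, which is exactly the hypothesis of the lemma. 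The lemma therefore yields $E\bigl(\sum_k w_k Y^k \,\bigm|\, \sum_k w_k \tilde{S}^k = s,\, G = g\bigr) = s$ for each $g$, so both sides of the desired equality equal $s$ and thus coincide.

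The main obstacle is not algebraic but is the usual bookkeeping around versions of conditional expectations, in light of the authors' own caveats in Section~\ref{sec:problem_statement}: one must apply Lemma~\ref{lemma:multi-fairness} inside each regular conditional distribution given $G = g$, and interpret all conditional-expectation identities as equalities of almost-sure versions. This is clean provided regular conditional distributions exist (which is automatic under the standard assumption that the underlying variables take values in Polish spaces), and once that is granted every remaining step is a direct consequence of the tower property and Lemma~\ref{lemma:multi-fairness}.
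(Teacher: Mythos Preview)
Your proposal is correct and follows essentially the same route as the paper: the paper's entire proof is the one-line assertion that the transformed scores satisfy $E(Y^k \mid \tilde S_1 = s_1,\ldots,\tilde S_K = s_K, G = g_i) = s_k$, from which the result is said to follow immediately. You supply exactly this identity via the tower property and then spell out how both conclusions (individual and composite parity, the latter via Lemma~\ref{lemma:multi-fairness} applied within each group) are consequences, so your argument is a fleshed-out version of the paper's.
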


In particular, if we calibrate the individual models conditional on (all of) the scores, then the composite score will satisfy predictive parity regardless of the weights chosen.  In many internet industry applications, the weights are chosen through online experimentation (for instance, see \cite{Agarwal:oms}), and this allows for guarantees of predictive parity regardless of the weights ultimately chosen.  Even if calibration of the composite outcome is not of interest to the practitioner, the proposed transformations to the individual model scores improves accuracy in the mean squared error sense, i.e.
\begin{align*}
E(Y^k &| S_1,...,S_K, G)  = \argmin_t E\left( Y^k  - t(S_1,...,S_K, G)\right)^2.
\end{align*}

\begin{remark}
Mitigation requires estimating conditional expectations.  While the Nadaraya-Watson have obvious extensions to the multivariate context, they suffer from the ``curse of dimensionality.'' The rate of convergence of these estimators becomes $\sqrt{M\cdot h^K}$ in cases where $d=K$.  While this may not be a serious concern in large data applications, especially when $K$ is of modest size, it can be challenging in other settings.  In small data settings and when each of the outcome labels are binary, Platt's scaling can be modified for this task by fitting a logistic regression of each of the outcome variables on the full set of model scores.    
\end{remark}
%\vspace{-0.1in}

\section{Fairness for Classifiers and Connections to Marginal Outcome Fairness}
\label{sec:infra-marginality}

In classification settings, candidates whose scores are above (or below) a common threshold receive the same treatment. For instance, if an algorithm predicting default risk is used to approve loans, all individuals with a suitably low score are given the same treatment in the sense that they are approved for a loan. In these settings, the matter of fairness comes down to understanding whether an equitable standard is applied to all groups. While all groups are subjected to a common score threshold, mis-calibration in scores between groups can result in different effective thresholds being applied to different group as measured by the marginal outcome of each group \citep{corbett2018measure}. Marginal outcomes have been used to detect discrimination in applications such as police search \citep{10.1257/000282806776157579} and medical treatment \citep{NBERw16888}.  \citet{ayres02} notes in such applications of ``outcome tests'' which compare the average (or infra-marginal) outcomes stemming from classification decisions across groups, can give misleading impressions of algorithmic bias and argues for the need to assess disparities in outcomes associated with marginal (or threshold) decisions.  
%In many classification settings, it is of interest to measure the outcomes associated with the marginal decision.  \cite{ayres03} termed this problem {\it infra-marginality} and argues that equality of outcomes of people whose scores are at the decision threshold is essential in applications such as mortgage lending and police search decisions.  This ensures that decision criteria is not more or less stringent for some groups.  
Formally, suppose that a classifier yields a positive decision if the model score $S$ exceeds a marginal threshold $t$.  Fairness in marginal outcomes can be characterized by the following definition.

\begin{definition}
A classifier resulting in a positive classification exactly when a model score $S$ exceeds a threshold $t$ is {\bf marginal outcome fair} if $E \left( Y_i |G_i = g_1, S_i = t \right) = E \left( Y_i |G_i = g_2, S_i = t \right)$.
\end{definition}

We now describe the equivalence between marginal outcome fairness and predictive parity.  If the model satisfies predictive parity in the sense of definition \ref{def:pred_rate_parity}, then the marginal outcome fairness condition is satisfied.  On the other hand, if this condition is met, calibrating the model at score values other than the marginal threshold will not affect the classifications decisions provided the conditional outcomes are monotonic.  Of course, this equivalence requires that the average outcome at all scores below (or above) the marginal threshold does not exceed (or fall below) the average outcome at the marginal threshold and so it may also be of interest to ensure that the outcomes of candidates below the marginal threshold are indeed below that of the marginal threshold. However, without observed data on such observations, evaluating this monotonicity may require strategic randomization of classifications, which is beyond the scope of the present work.  

The methodology derived for testing predictive parity can be applied to this context.  A consistent and minimally biased estimator of the marginal outcome can be computed as 
\[
\hat f_g(t^*) = \frac{\frac{1}{M}\sum_{m:G_m = g}\frac{1}{n_m}\sum_{i: S_i \geq t^*} Y_{m,i} K((S_{m,i} - s)/h)}{\frac{1}{M} \sum_{m:G_m = g}\frac{1}{n_m} \sum_{i: S_i \geq t^*} K((S_{m,i} - s)/h)}
\]
Note that this has the minor modification of only averaging scores above the threshold, since it is now assumed that we are missing labels below the marginal threshold.  Asymptotic normality of the estimated average marginal outcome holds analogously to Theorem \ref{thm:asymptotic_normality}, allowing for rigorous testing of equality between groups.   

\begin{theorem}
\label{thm:inframarginality}
Under the assumptions of Theorem \ref{thm:asymptotic_normality}, there exists a function $\sigma^2_g (\cdot)$ such that 
\[
\sqrt{Mh} \left( \hat f_g(t^*) - E \left( Y_i |G_i = g, S_i = t^* \right) \right) \rightarrow N \left( 0, \sigma^2_g (t^*) \right).
\]
\end{theorem}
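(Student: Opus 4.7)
The plan is to follow the proof of Theorem~\ref{thm:asymptotic_normality} almost verbatim, once we recast the truncated estimator as an ordinary Nadaraya--Watson estimator with a modified kernel. Restricting the inner sums to indices with $S_{m,i} \geq t^*$ and evaluating the kernel at $t^*$ is structurally identical to running the original Nadaraya--Watson estimator with the half-kernel $\tilde K(u) = K(u)\,\mathbf{1}\{u \geq 0\}$ applied to all observations in group $g$. Since $\tilde K$ inherits the decay, integrability, and moment-boundedness properties of $K$ --- the only change being that all kernel integrals are taken over $[0,\infty)$ rather than $\mathbb{R}$ --- the Assumptions supporting Theorem~\ref{thm:asymptotic_normality} continue to hold after this substitution.

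Given this reduction, I would run the same delta-method / central limit argument used in the proof of Theorem~\ref{thm:asymptotic_normality}. The estimator is a ratio of two $M^{-1}$-averages whose summands are cross-user independent but within-user dependent. Linearizing the ratio about the limits of its numerator and denominator, and applying Lindeberg--Feller user-by-user to each term of the linearization, yields asymptotic normality at rate $\sqrt{Mh}$. The limiting variance $\sigma^2_g(x)$ is then given in closed form by the usual kernel-regression expression, with the kernel moments $\int K^j(u)\,du$ replaced by their half-line analogues $\int_0^\infty K^j(u)\,du$ and with the conditional density of $S$ given $G = g$ evaluated at $t^*$.

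The main obstacle, and the only real point of departure from the interior proof, is the bias analysis. At an interior point, the first-order Taylor term in the bias expansion of a Nadaraya--Watson estimator vanishes by symmetry of $K$, since $\int u K(u)\,du = 0$, leaving an $O(h^2)$ bias. At the boundary only the half-kernel acts, and $\int_0^\infty u K(u)\,du \neq 0$, so the leading bias is only $O(h)$. Making this contribution asymptotically negligible relative to the $(Mh)^{-1/2}$ stochastic fluctuations therefore requires a stronger (undersmoothing) bandwidth condition than in the interior case, namely $\sqrt{Mh}\cdot h \to 0$. Crucially, the ``$+\,o(h^d)$'' offset in the conclusion of Theorem~\ref{thm:asymptotic_normality} is exactly what absorbs this first-order boundary bias: taking the smoothness parameter $d = 1$ and choosing $h$ to satisfy the theorem's rate condition is enough to conclude the analogue stated in Theorem~\ref{thm:inframarginality}. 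For the cleaner statement without an offset, one can either undersmooth explicitly so that $M h^{3} \to 0$, or apply a standard boundary-correction device (e.g., a boundary kernel or a local linear fit) to restore the $O(h^2)$ bias rate. With the bias handled, the remaining variance and Lindeberg computations are essentially identical to those in the proof of Theorem~\ref{thm:asymptotic_normality}.
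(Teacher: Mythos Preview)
Your proposal is correct and matches the paper's approach exactly: the paper's entire proof of Theorem~\ref{thm:inframarginality} is the single sentence ``The proof readily follows that of Theorem~\ref{thm:asymptotic_normality},'' so your reduction to a half-kernel Nadaraya--Watson estimator and reuse of the Theorem~\ref{thm:asymptotic_normality} machinery is precisely what is intended. In fact, your discussion of the boundary-bias issue (the half-kernel failing the vanishing-moment condition of Assumption~4, and the need for undersmoothing or a boundary correction) is more careful than the paper itself, which does not address this point.
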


%\vspace{-0.1in}
%\subsection{Marginal Mitigation}
%\vspace{-0.2cm}
\subsection{Mitigation} When the test rejects, there is evidence that one group is ``advantaged'' in the sense that the marginal candidates from this group are held to a higher standard than the marginal candidates from the other,``disadvantaged'' group.  In order to correct for differences in treatment of marginal (or threshold) candidates, we can raise the threshold applied to the advantaged group, lower the threshold applied to the disadvantaged group, or some combination of both.  Raising the threshold of the advantaged group may be preferable from a technical standpoint since it avoids the need to infer the outcomes below the marginal threshold, however, it may not be the best choice for the problem at hand (e.g. the practitioner may be satisfied with the threshold applied to the advantaged group and might rather decrease the threshold for the disadvantaged group).

Let $\hat f_g (t)$ be a prediction of the average outcome of group $g$ at a threshold $t$.  For $t$ above the marginal threshold, this can simply be the non-parametric regression estimates described in Section \ref{sec:testing}.  Below the marginal threshold, the average outcome can be estimated, for instance, by a linear regression of outcomes near the marginal threshold.  Now, the marginal threshold can be replaced with group-specific thresholds $t^*_g$ which satisfy 
\begin{equation}\label{eqn:marginal_outcome}
\hat f_{g_1} (t^*_{g_1}) = \hat f_{g_2} (t^*_{g_2}).
\end{equation}

\begin{remark}
While it may seem unfair to apply differing thresholds to each group, this procedure is equivalent to calibrating models and applying the same threshold to each group.  Here, the differing thresholds are to account for the differing calibration of the models.
\end{remark}

Note that there are typically many thresholds yielding equal marginal outcomes, and additional constraints can be added to choose appropriate thresholds.  For instance, thresholds can be modified to mitigate marginal outcome bias while maintaining the same overall rate of positive classification.  Let $\hat F_{g_i}(\cdot)$ be the cumulative distribution function of the scores arising from group $g_i$.  Choosing fair group-level thresholds maintaining the same overall positive classification rate can be accomplished by adding the constraint that 
\begin{equation} \label{eqn:budget}
\sum_i p_{g_i} \int_0^{t^*_{g_i}} \hat f_{g_1} (t) \hat F_{g_i}(t) =  \sum_i p_{g_i} \int_0^{t^*} \hat f_{g_1} (t) \hat F_{g_i}(t) 
\end{equation}
where $p_g = P(G = g)$. 
\begin{theorem}
\label{thm:infra2}
Adjusted thresholds $t^*_{g}$ that mitigate against marginal outcome bias while maintaining the overall positive classification rate, i.e. that solves \eqref{eqn:marginal_outcome} and \eqref{eqn:budget} exist when the average outcome functions and score distributions are continuous. 
\end{theorem}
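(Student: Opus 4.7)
The plan is to apply the intermediate value theorem along a one-parameter family of threshold pairs parametrized by the common marginal outcome. First, I parametrize all pairs $(t^*_{g_1}, t^*_{g_2})$ solving Equation \ref{eqn:marginal_outcome} by the common value $c = \hat f_{g_1}(t^*_{g_1}) = \hat f_{g_2}(t^*_{g_2})$. Under the continuity hypothesis of the theorem (and assuming, as is typical for a usable score, that each $\hat f_g$ is monotone on the relevant range), for every $c$ in the common image of $\hat f_{g_1}$ and $\hat f_{g_2}$ there is a well-defined pair $t_{g_i}(c) = \hat f_{g_i}^{-1}(c)$, and continuity of $\hat f_g$ makes each $c \mapsto t_{g_i}(c)$ continuous.

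Second, I would define the excess-budget function
\[
B(c) \;=\; \sum_{i} p_{g_i} \int_0^{t_{g_i}(c)} \hat f_{g_i}(t)\, \hat F_{g_i}(t)\, dt \;-\; T,
\]
where $T$ denotes the right-hand side of Equation \ref{eqn:budget}, i.e.\ the budget consumed under the original common threshold $t^*$. Continuity of $\hat f_{g_i}$, $\hat F_{g_i}$, and $t_{g_i}(\cdot)$ implies $B$ is continuous in $c$.

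Third, I would exhibit values of $c$ at which $B$ has opposite signs. Set $c_- = \min\{\hat f_{g_1}(t^*), \hat f_{g_2}(t^*)\}$ and $c_+ = \max\{\hat f_{g_1}(t^*), \hat f_{g_2}(t^*)\}$. At $c = c_-$, one of the thresholds $t_{g_i}(c_-)$ coincides with $t^*$ while the other lies no higher than $t^*$ (by monotonicity of $\hat f_g$), so the integrated mass is at most $T$, giving $B(c_-) \leq 0$; symmetrically $B(c_+) \geq 0$. The intermediate value theorem then produces some $c^* \in [c_-, c_+]$ with $B(c^*) = 0$, and the pair $(t_{g_1}(c^*), t_{g_2}(c^*))$ simultaneously satisfies Equations \ref{eqn:marginal_outcome} and \ref{eqn:budget}.

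The main obstacle I anticipate is the case in which $\hat f_g$ fails to be strictly monotone, so that the level set $\{t : \hat f_g(t) = c\}$ is a nontrivial interval and $c \mapsto t_{g_i}(c)$ is not single-valued. I would handle this either by choosing a continuous selection from the level sets (possible because level sets of a continuous real-valued function on an interval are closed subintervals, on each of which the endpoints depend continuously on $c$), or by approximating $\hat f_g$ by a sequence of strictly monotone continuous functions, applying the IVT argument above to each approximation, and extracting a subsequential limit on the compact threshold rectangle, using dominated convergence on the integrals to verify that the limiting pair still solves the original system.
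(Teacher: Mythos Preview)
Your argument is correct and, at its core, follows the same strategy as the paper: reduce the two-equation system to a single equation in one real parameter and then apply the intermediate value theorem.  The implementations differ slightly.  The paper first passes to calibrated scores $\tilde S_g$, so that in calibrated space the marginal-outcome constraint (Equation~\ref{eqn:marginal_outcome}) becomes simply $t^*_{g_1}=t^*_{g_2}=t'$; the budget constraint is then a single continuous function of $t'$ that vanishes at $t'=0$ and exceeds the target as $t'\to\infty$, and IVT finishes.  You instead parametrize directly by the common marginal-outcome value $c$ and invert each $\hat f_{g_i}$; these two parametrizations are equivalent, since in calibrated space the score equals the outcome, so the paper's $t'$ is precisely your $c$.

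What your version buys: by choosing the tighter IVT endpoints $c_\pm=\{\min,\max\}\{\hat f_{g_1}(t^*),\hat f_{g_2}(t^*)\}$ rather than $0$ and $\infty$, you obtain the additional information that a solution exists with one group's threshold raised and the other's lowered relative to $t^*$, which is the natural ``advantaged/disadvantaged'' interpretation the paper motivates in prose.  What the paper's version buys: it sidesteps your explicit monotonicity hypothesis on $\hat f_g$ by absorbing that issue into the calibration step (though of course the calibration map itself must be well defined, so the gain is largely cosmetic).  Your discussion of the non-strictly-monotone case via continuous selections or monotone approximations is more careful than anything the paper provides.  One small point to tighten: for your endpoint argument you need $c_-$ and $c_+$ both to lie in the common image of $\hat f_{g_1}$ and $\hat f_{g_2}$; this holds in the standard case where both functions range over the same interval (e.g., $[0,1]$ for probability-type outcomes), but is worth stating.
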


\begin{remark}
In applications without observed outcomes below the marginal threshold, it is important to note that the predicted outcomes below the marginal threshold will be merely suggestive.  In practice, one can either either iteratively correct the thresholds and re-test for marginal outcome fairness or temporarily lower the threshold of the disadvantaged group to learn the appropriate threshold.
\end{remark}

\section{Quantifying Impact}
\label{sec:impact}
While departures from predictive parity (or correspondingly difference in calibration curves) can be indicative of a fairness issue, they do little to quantify the tangible disadvantages that groups face due to unfair calibration.  For instance, in ranking contexts, it is not necessarily the case that calibrating unfairly calibrated models would influence the rankings, so it can be hard to interpret the true impact of calibration.  In this section, we describe metrics to quantify the impact of deviations from predictive parity. 

\subsection{Expected Calibration Error}
% The definition of predictive parity is closely related to notions of calibration. It is common to quantify the degree of mis-calibration of a metric through Expected Calibration Error, which is defined as
% \[
% \text{ECE} = \left( E \left( s - E(Y|s) \right)^2 \right)^{1/2}
% \]

% Although this metric does not directly correspond to predictive parity, disparities in this metric across groups can be indicative of unfair model performance.  \citet{roelofs2022mitigating} and \citet{Kumar2019} develop estimators of the ECE metric; however, they note that the estimators may not have desirable properties such as consistency (that is, the estimators may not asymptotically tend to the true ECE).  Using Nadaraya-Watson estimators, the ECE can be consistently estimated by
% \[
% \widehat{\text{ECE}} := \frac{1}{n}\sum_i \left(s_i - \hat f(s_i) \right)^2~.
% \]
% Further details are given in Appendix \ref{appendix: ECE}.

The definition of predictive parity is closely related to notions of calibration. It is common to quantify the degree of mis-calibration through the discrepancy between the score and the conditional estimated outcome. The True Calibration Error metric (see, for instance, \citet{roelofs2022mitigating} is defined as:
\begin{equation}
    \text{TCE} = \left( E \left( s - E(Y|s) \right)^2 \right)^{1/2} \label{eq: TCE}
\end{equation}
Although this metric does not directly correspond to predictive parity, disparities in this metric across groups can be indicative of unfair model performance. This has been measured in several previous works through the equal-width Expected Calibration Error \cite{guoCalibration}, which computes $E(Y|s)$ through binning scores and computed per-bin expected outcomes $Y$. \citet{roelofs2022mitigating} and \citet{Kumar2019} found that this binning mechanism leads to a biased estimate of TCE and develop other estimators; however, they note that the estimators may not have desirable properties such as consistency (that is, the estimators may not asymptotically tend to the true TCE). Using Nadaraya-Watson estimators, the TCE can be consistently estimated by
\[
\widehat{\text{TCE}} := \frac{1}{n}\sum_i \left(s_i - \hat f(s_i) \right)^2~.
\]
We demonstrate a version of this in Section \ref{sec:empirical}. Further details are given in Appendix \ref{appendix: ECE}.

\subsection{Impact in Classification and Ranking Settings}

While unfairness in predictive parity can be indicative of model bias, the definition does not provide insights into the tangible effects in classification or recommender systems.  In such applications, it can be helpful to quantify the discrepancy in classification decisions or ranks between a model and the optimally calibrated version of the same model. Specifically, if a model score $S$ predicting an outcome $Y$, then it is of interested to compare either the classification decisions or ranks obtained by using $S$ with the optimally calibrated version (by group $G$),
\[
S^* = E \left( Y | S, G \right)~.
\]

For a classifier predicting a positive label whenever $S$ exceeds a threshold $t$, the delta in positive classification-rate due to (unfair) calibration for a group $g$ can be defined as
\[
\Delta PCC_g = E \left( I \left\{  S^*_i > t\right\} | G = g \right) - E \left( I \left\{  S_i > t\right\} | G = g \right)
\]
(or equivalently by modifying the threshold rather than the score itself as discussed in Section \ref{sec:infra-marginality}).  This quantity can be consistently estimated by an empirical version using the Nadaraya-Watson estimator of $S^*$
\begin{theorem}\label{thm:delta_classification}
Under the conditions of Theorem \ref{thm:asymptotic_normality},
\[
\widehat{\Delta PCC}_g = \frac{1}{n_g}\sum_{i:G_i = g} I \left\{  S_i > t\right\}  - I \left\{  \hat f_g(S_i) > t\right\} \xrightarrow{p}  \Delta PCC_g
\]
(where $\xrightarrow{p}$ denotes convergence in probability).
\end{theorem}

Similarly, in ranking contexts where we are considering fairness to the recommended people (as in connection recommendation products) or content creators of the ranked items, it is useful to quantify the disparity in rank positions.  Suppose that for each query $q$, a model produces scores $S_{1,q},...S_{m,q}$ for $m$ items or individuals, and that these items are ranked according to the scores.  Let $R(S_{i,q}, q)$ denote the ranking function of the items which takes value $r$ when $S_{i,q}$ is the $r$-th largest score among the scored items for query $q$.  We define the delta in rank due to (unfair) calibration for group $g$ as
\[
\Delta RC_g = E \left( R(S_{i,q}, q) -  R(S^*_{i,q}, q) | G_i = g \right)~.
\]

\begin{theorem}\label{thm:delta_rank}
Under the conditions of Theorem \ref{thm:asymptotic_normality},
\[
\widehat{\Delta RC}_g = \sum_q \frac{1}{m} \sum_{i=1}^m I \left\{ G_i = g \right\} \cdot \left( R(S_{i,q},q) - R(\hat f_g(S_{i,q}),q) \right) \xrightarrow{p} \Delta RC_g~.
\]
(where $\xrightarrow{p}$ denotes convergence in probability).
\end{theorem}

This metric quantifies the aggregate loss (or gain) of exposure that a group receives as a result of an unfairly calibrated model.  Modifications of this metric, such as considering the change in rank per slot, or considering only the top few slots may also be of interest.  Once again, the Nadaraya-Watson estimators can be used to obtain a consistent estimator of this metric.  

\begin{remark}
Confidence intervals can be obtained for either of these metrics, but additional care must be taken because the non-parametric rates of convergence of the Nadaraya-Watson estimators which are slower then the usual $\sqrt{n}$ rates.  We omit these results for brevity.  
\end{remark}

\section{Empirical Findings}
\label{sec:empirical}
We now empirically validate the effectiveness of the non-parametric test in two ways. First, we illustrate the importance of explicitly distinguishing between testing under dependence and independence by showing that changing these assumptions can lead to vastly different conclusions about predictive parity group fairness. Secondly, we will compare the non-parametric test to existing calibration errors to show that it has much lower bias when testing under dependence and performs just as well under independence. 

We will rely on the Bias-By-Construction (BBC) framework proposed in Roelofs et. al . \cite{roelofs2022mitigating} for these experiments. In the BBC framework, the score distribution is parameterized by a Beta distribution $S\sim Beta(\alpha,\beta)$ while the conditional expectations given a score is a parametric function of score $E(Y|s)=g^{-1}(b_{0}+b_{1}f(s))$. The advantage of this framework is that it enables calculation of the true calibration error TCE \eqref{eq: TCE} by computationally evaluating it across the range of scores $s$ as the function $E(Y|s)$ is known by construction. Similarly, the empirical calibration error measured by function $f$, $CE_{f}$ can be computed by simulating outcome draws $Y(s)\sim ~Bernoulli(E(Y|s))$ for given scores. The bias is then represented as the difference between the simulated and true error $Bias(f)=CE_{f}-TCE$. 

First, we show the potential impacts of dependency assumptions on outcomes testing. In this experiment, we generate 50k samples for each of two groups from the same $Beta(\alpha, \beta)$ distribution. However, we slightly change the constants of the conditional expectation from one group to another. The resulting calibration plots are shown in the left plot of Figure \ref{fig:calibrationPlot}. Now, at each of the 10 score deciles of group 1, we inject 1000 samples with higher expectations\footnote{Specifically, we increase the expected outcome of these 10 individuals by $0.7s$ such that the expectation increases more for higher scoring individuals.}. This introduces 10k more total samples, but importantly there are only 10 more unique individuals in the new data (less than 1\% more unique individuals). As the figure shows, this can drastically change the readout of the expected outcomes curve. Whereas group 2 was previously advantaged by the miscalibration\footnote{By "advantaged by miscalibration," we refer to the fact that conditional on score, group 1 has higher expected outcome, but someone interpreting the score would assume that they have the same outcome. This can influence how individuals are ranked or perceived if the scores have downstream uses}, with just 10 more individuals, the curve now shows that group 1 is advantaged. This is captured by the non-parametric test as well. In Figure \ref{fig:IndVsAgg} we see that when evaluating expectations under dependence, we recover the original conditional expectation pattern with group 1 have lower conditional expectations than group two. However, when evaluating with independent samples, the conditional expectation of group 1 is mostly higher than that of group 2. These examples show the importance of accounting for sample dependence, as merely a few individuals with a large number repeated samples can vastly skew conclusions.

\begin{figure*}[!ht]
     \nocaption
     \centering
     \begin{subfigure}[b!]{0.48\textwidth}
         \centering
         \includegraphics[height=5.5cm]{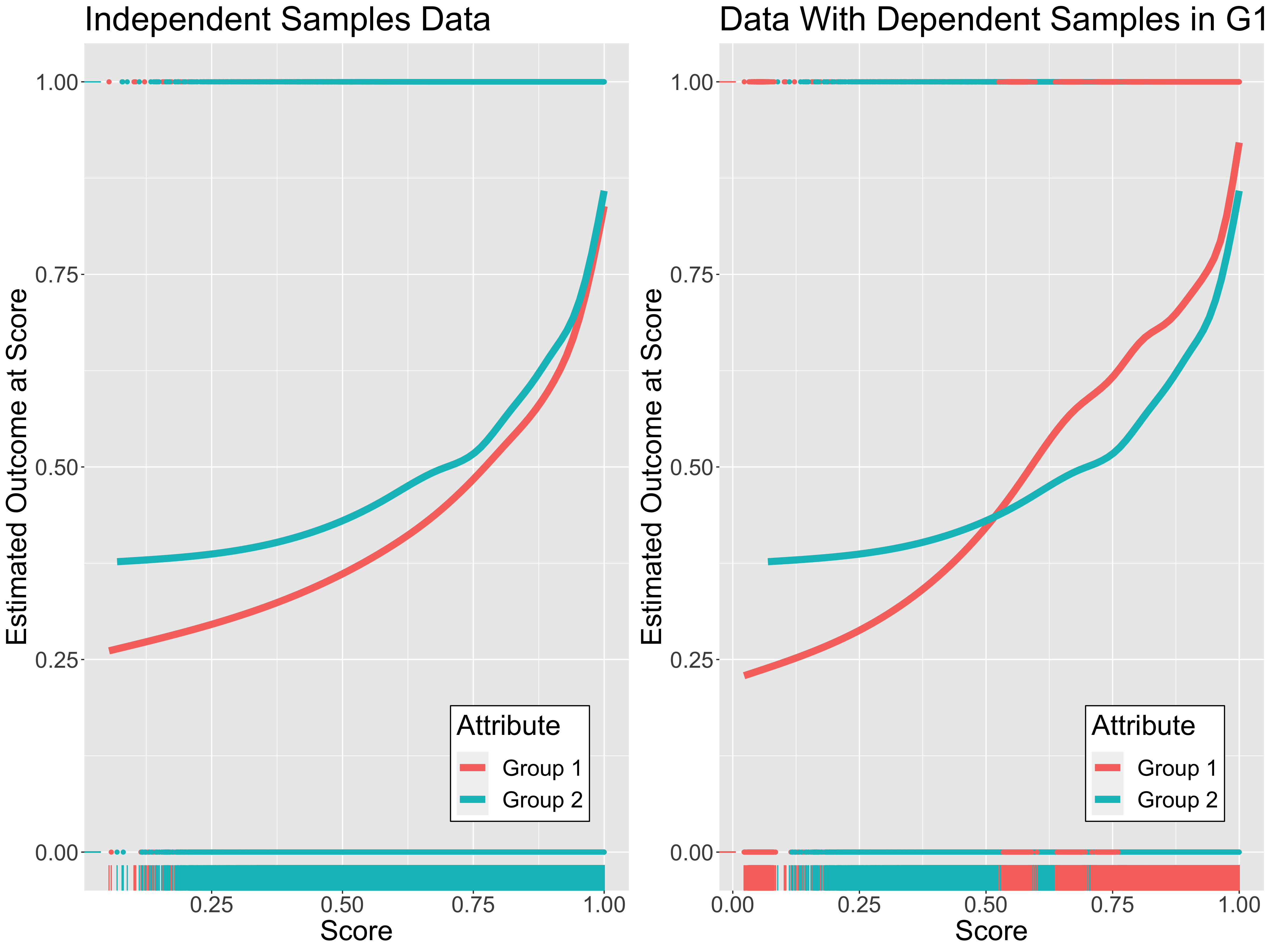}
         \caption{Plots of smoothed conditional expectations when we only have independent samples (left) versus if we inject 10 new individuals each of whom have many repeated samples that are skewed upwards in conditional expectation.}
         \label{fig:calibrationPlot}
     \end{subfigure}
     \hfill
     \begin{subfigure}[b!]{0.48\textwidth}
         \centering
         \includegraphics[height=5.5cm]{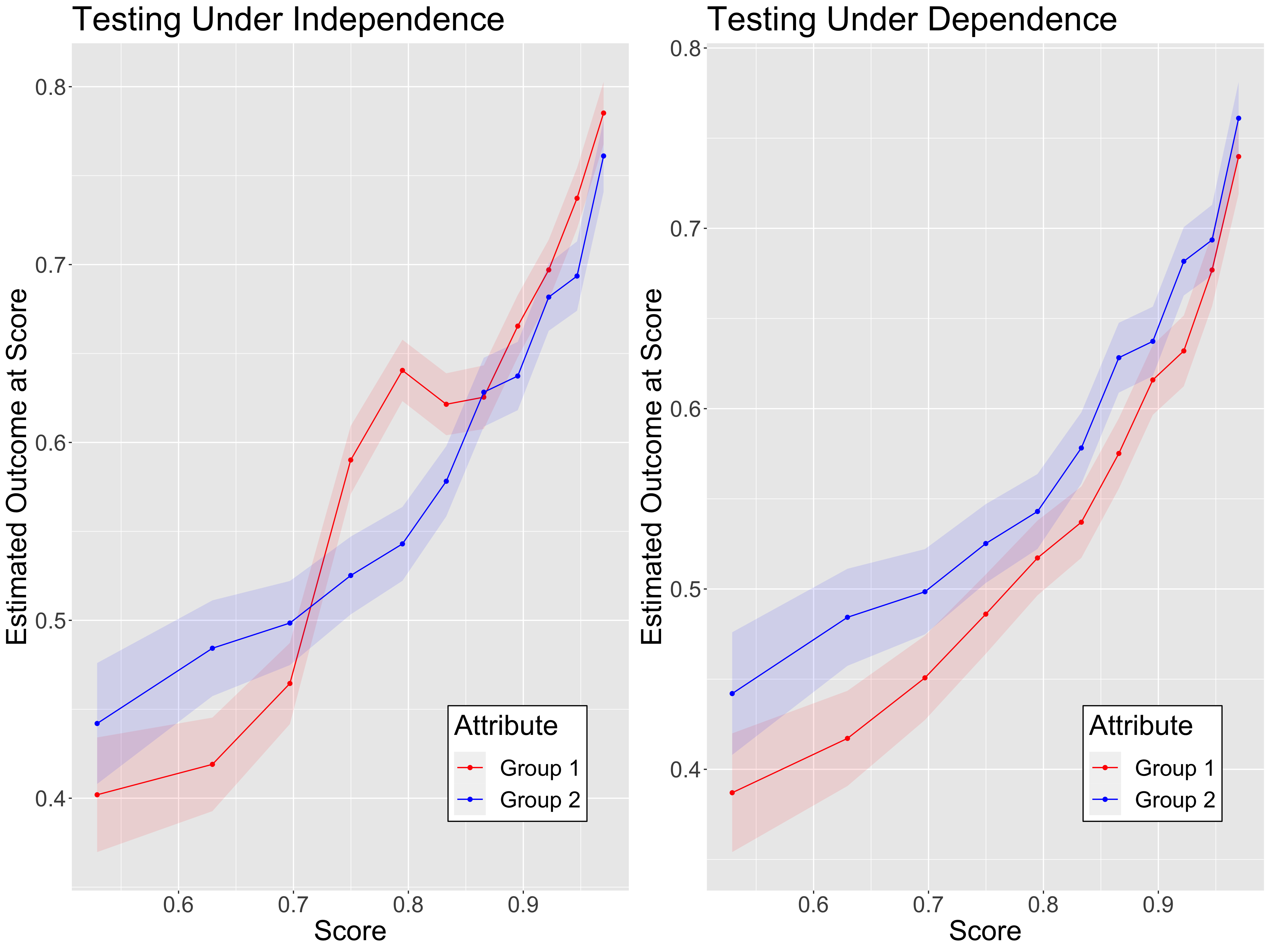}
         \caption{Non-parametric test applied under independence (left) and dependence (right) assumptions. With dependence assumptions, the true pattern is recovered (group 1 below group 2) whereas the opposite is found with independence assumptions.}
         \label{fig:IndVsAgg}
     \end{subfigure}
     %\caption{Plots across real world use-cases}
\end{figure*}

Next, we show that when used as a measure of calibration error, the non-parametric test shows much less bias when testing under dependence and performs just as well as the benchmark $CE$ functions when testing under independence. To conduct this experiment, we extend the BBC framework in \cite{roelofs2022mitigating} to account for sample dependency. Intuitively, we make the extension such that individuals with more samples tend to have higher scores and are better calibrated. Specifically, suppose that there are two groups and our construction is as follows:
\begin{enumerate}
    \item \textbf{Per individual sample rates}: Each individual in group 1 has a random number of samples modeled by $N_{1}\sim 1+Pois(\lambda_{1})$ where $\lambda_{1}>0$. Each individual in group 2 has exactly one sample $N_{2}=1$.  
    \item \textbf{Score distribution}: The score distribution is the same for both groups, but for each individual is now sample-dependent and parameterized as $S(N) \sim Beta(max(N,N_{max}),1)$ where $N_{max}$ is a fixed constant.
    \item \textbf{Conditional expectations}: The conditional expectation function is also the same for both groups but is now also sample-dependent per individual. To model this, we pick $f(s)=g(s)$ in the parameterization $g^{-1}(b_{0}+b_{1}f(s))$ such that as $b_{0}\rightarrow 0$ and $b_{1} \rightarrow 1$, the closer we are to perfect calibration. Thus, a natural choice for $b_{0}$ and $b_{1}$ to have this behavior is to replace the terms with $b_{0}-zb_{0}$ and $b_{1}+z(1-b_{1})$ where $z=min(\frac{N}{N_{max}},1)$. 
\end{enumerate}

With this setup, we now benchmark the non-parametric test through the aforementioned bias estimation method  $Bias(f)=CE_{f}-TCE$. For the non-parametric test, $CE_{NP}$ is assessed as the $L_2$-norm of the distances between the estimated outcome and the score, that is $ CE_{NP}=\sqrt{\sum_{q=1}^{100}(\hat{f}(S(q))-S(q))^{2}}$ where $S(q)$ is the score at the $q$th quantile. For benchmarking, we will compare the the bias of the non-parametric estimator against that of the equal mass Expected Calibration Error (ECE, popularized in \cite{guoCalibration}) using 15 bins. However, several recent studies have identified issues with ECE in that using a fixed number of bins can introduce bias and additionally that is sensitive to sample size. We provide our own explanation of the asymptotic bias in Appendix \ref{appendix: ECE} and recommend the reader to \cite{Kumar2019,roelofs2022mitigating,Gruber2022,BlasiokCalibration} for further discussion of bias in ECE. Hence, we will also compare against the recently proposed Monotonic Sweep Calibration Error (MSCE) from \cite{roelofs2022mitigating}, which was designed to be an improved version of the ECE and does not use a fixed number of bins. As both of these are $L_{p}$ measures of calibration error, their biases are directly comparable to that of our non-parametric estimator. 

\begin{figure*}[ht]
    \centering
    \includegraphics[height=6cm]{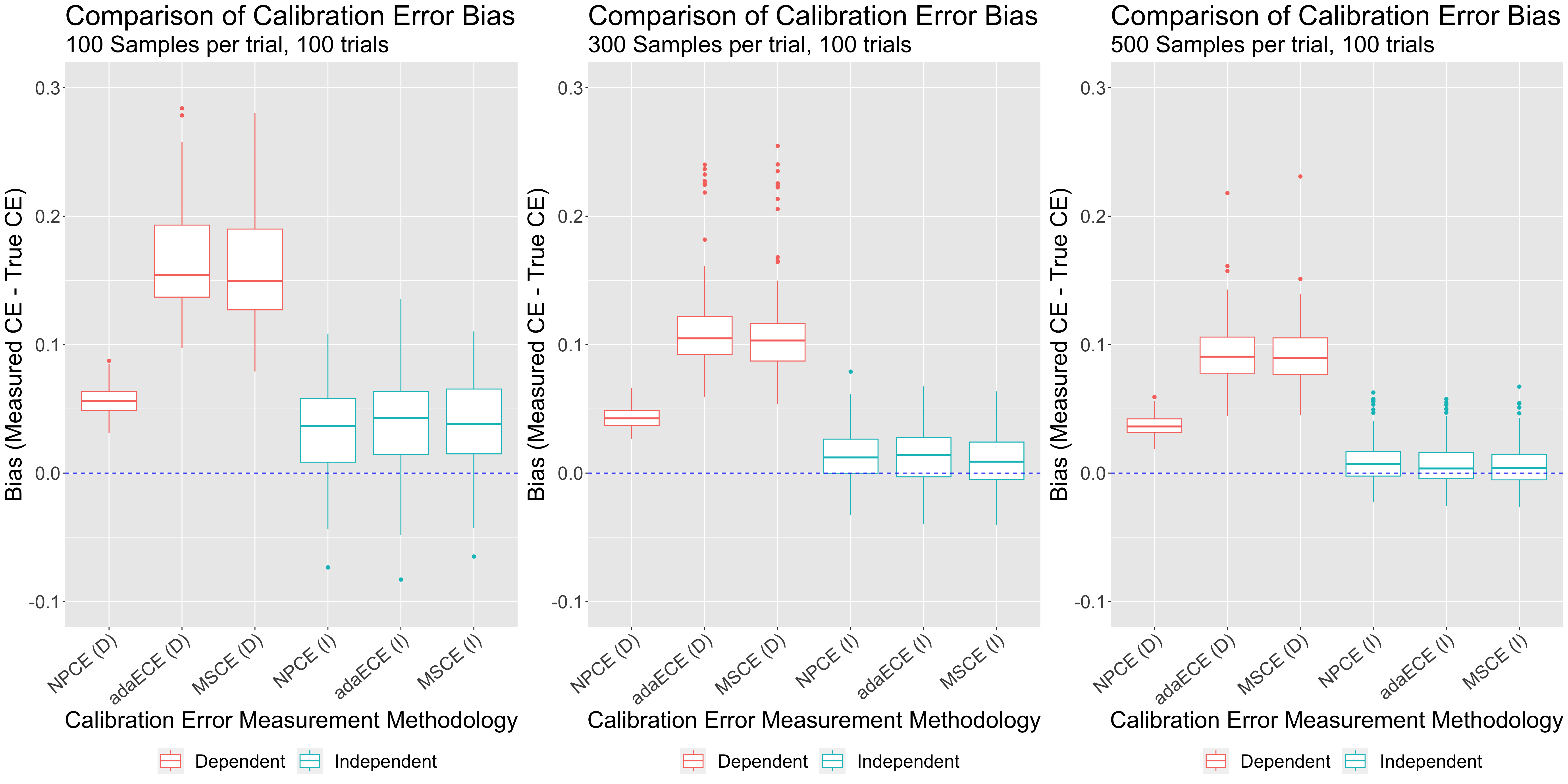}
    \caption{
    Non-parametric outcome test applied on dependently (red) vs. independently (blue) generated samples with bias measured on the Y-axis. We observe that $CE_{NP}$ is as good (in having low bias) as the benchmarks $CE_{ECE}$ and $CE_{MSCE}$ when dealing with independent samples. However, the non-parametric test is much better when it comes to measuring calibration error across dependent samples. }
    \label{fig:CeBiasComparison}
\end{figure*}

Figure ~\ref{fig:CeBiasComparison} shows the results of the benchmarking. Looking at the standard scenario of independent samples first, we first observe that the non-parametric test yields bias that is equally low compared to the benchmarks. Moreover, increasing sample size decreases the bias for all metrics unanimously, with 500 samples yielding near-zero bias. This trend corresponds to observations made in \cite{roelofs2022mitigating} and \cite{Gruber2022}. Pivoting over to the dependent samples, we observe that the non-parametric test has significantly lower bias compared to the benchmark and that again bias continues to shrink as sample size increases. The reason that the non-parametric test outperforms the benchmarks by a large margin in this case is that certain individuals have calibration errors that are significantly lower due to having fewer repeated samples. While the non-parametric test correctly captures these dependencies across samples, other calibration metrics do not. This is an important factor in real settings, as indeed models may be better calibrated for individuals with more samples. Measuring calibration in such settings while assuming independence may therefore lead to incorrect conclusions, as highlighted in our first experiment.

%\vspace{-0.1in}
\section{Discussion}
\label{sec:discussion}
%\vspace{-0.2cm}
This work has proposed using non-parametric regression estimators to test for predictive parity and to provide a post-processing mitigation approach that calibrates models in cases where there is evidence of bias. We extended traditional results on the asymptotic normality of these estimators to a non-iid observation setting, allowing for valid statistical testing under the sorts of dependence that are common in industry applications.  We extend this methodology to address multi-objective models which are common in industry applications as well as to test for disparities in marginal outcomes.   

We illustrated the importance of distinguishing dependent from independent samples with our experiments. Not only can these assumptions lead to vastly different outcomes in fairness readouts, but testing for calibration under independence using metrics such as ECE or MSCE can can yield significant biases. 

We conclude by noting that there are numerous definitions of fairness \cite{Saravanakumar21}, which are often conflicting.  
The applicability of various fairness definitions has been the subject of numerous other works including \cite{simoiu2017} and \cite{corbett2018measure}. Predictive parity is an obvious choice when model scores need to be interpreted as likelihoods or risk scores, but encourage practitioners to use discretion when selecting an appropriate definition for their domain.  
%\newpage

%%
%% The next two lines define the bibliography style to be used, and
%% the bibliography file.
\bibliographystyle{ACM-Reference-Format}
\bibliography{Fairness}

\pagebreak
%%
%% If your work has an appendix, this is the place to put it.
\appendix

\section{Expected Calibration Error} \label{appendix: ECE}

The True Calibration Error metric is defined as
\[
\text{TCE} = \left( E \left( s - E(Y|s) \right)^2 \right)^{1/2}.
\]

The binning approximation to this metric is defined as the expected calibration error of the ``binned'' estimator:
\[
\bar s (s_i) := E( s | s \in B(s_i))
\]
where $B_1,...,B_K$ is an arbitrary binning of the support of the scores, and $B(s)$ equals the bin containing $s$.

A plug-in estimator of the binned TCE metric is given by 
\[
\widehat{ \text{TCE}}_{bin} := \left( \sum_{k=1}^K \frac{ \left| \left\{ s_i: s_i \in B_k  \right\} \right|}{n} \sum_{s \in B_k}  \left( \bar s_k - \bar y_k \right)^2 \right)^{1/2}
\]

where $\bar y_k = \frac{1}{ \left| \left\{ s_i: s_i \in B_k  \right\} \right|}\sum_{i: s_i \in B_k} y_i$ and $\bar y_k = \frac{1}{ \left| \left\{ s_i: s_i \in B_k  \right\} \right|}\sum_{s \in B_k} s_i$.
Since $\bar s_k \xrightarrow{a.s.} E(s | s \in B_k)$, $\bar y_k \xrightarrow{a.s.} E(y | s \in B_k)$, and  $\frac{\left| \left\{ s_i: s_i \in B_k  \right\} \right|}{n}\xrightarrow{a.s.} P(s \in B_k)$, it follow that
\[
\widehat{ \text{TCE}}_{bin} \xrightarrow{a.s.} \text{TCE}_{bin} := \sum_k P(s \in B_k) \left( E(y | s \in B_k) - E(s | s \in B_k)\right)^2
\]
which is the Expected Calibration Error of the binned estimator that predicts $E(s | s \in B_k)$ when $s \in B_k)$.  \cite{Kumar2019} note that $\text{TCE}_{bin} \leq \text{TCE}$, and we see that the empirical binned estimator is asymptotically biased.

\cite{roelofs2022mitigating} alternatively define a ``label-binned'' empirical calibration error,
\[
\widehat{ \text{TCE}}_{lb} := \sum_k \sum_{i: s_i \in B_k} \left(s_i - \bar y_k \right)^2
\]
Writing 
\begin{align*}
\left(s_i - \bar y_k \right)^2 & = \left(s_i - E(y_i|s_i \in B_k) \right)^2 + 2 \cdot \left(s_i - E(y_i|s_i \in B_k) \right) \\ 
& \cdot \left(\bar y_k - s_i \in B_k \right)  + \left(\bar y_k - s_i \in B_k \right)^2 
\end{align*}
it is readily seen that 
\[
\widehat{ \text{TCE}}_{lb} \xrightarrow{a.s.} E( s - E(Y| B(s)))^2
\]
where $B(s)$ denotes the bin containing $s$.  In general,  $E( s - E(Y| B(s)))^2 \neq \text{TCE}$ and we see that this estimator is also asymptotically biased.  Unlike the binned estimator, the label-binned variant does not appear to systemically over or under estimate the true calibration error.  As a simple example, suppose that scores are uniformly distributed on the interval $[0,1]$.  If $E(Y|s) = s$, the label-binned estimator asymptotically over estimates the $TCE$ when a single bin is used.  On the other hand, if $E(Y|s) = 1 - s$ the label-binned estimator asymptotically over estimates the $TCE$ when a single bin is used.

\subsection{Consistent estimation of the Expected Calibration Error}  

The inconsistency of the binned or label binned estimators stems from the inconsistency of the binned outcome values as an estimator of the conditional mean function.  As was seen in Section \ref{sec:mitigation}, this can be rectified by using an appropriate Nadaraya-Watson estimator.  Define 
\[
\widehat{\text{TCE}} := \frac{1}{n}\sum_i \left(s_i - \hat f(s_i) \right)^2
\]
where $\hat f(s)$ is the Nadaraya-Watson estimator of $E(Y|s)$.  This estimator is consistent for $\text{TCE}$.  To see this, write
\begin{align*}
\widehat{\text{TCE}} & = \frac{1}{n}\sum_i \left(s_i - \hat f(s_i) \right)^2 \\
& = \frac{1}{n}\sum_i \left(s_i -  f(s_i) \right)^2  + 2 \frac{1}{n}\sum_i \left(s_i -  f(s_i) \right) \left(f(s_i) - \hat f(s_i) \right) + \\
& \frac{1}{n}\sum_i \left(f(s_i) - \hat f(s_i) \right)^2~.
\end{align*}
The first term converges almost surely to the Expected Calibration Error.  The second two terms converge in probability to zero, which follows from uniform convergence of the Nadaraya-Watson estimator (as established, for instance, in \cite{luc78}).

\section{Multi-objective Toy Example}
\label{sec:mo_example}
Calibration of individual model scores may fail to provide calibration of the composite model when the various labels are correlated, particularly when the scores are correlated with prediction errors.  For instance, engagement metrics such as ``likes,'' ``comment,'' or ''click'' are often strongly correlated. As a simple toy example, consider models of two outcomes given by the following model. 
\[
Y^{(1)} = S^{(1)} + \epsilon^{(1)}, 
\]
and 
\[
Y^{(2)} = S^{(2)} + \epsilon^{(2)}~.
\]
Suppose that $S^(1)$ and $S^(1)$ are independent standard normal variables.  For simplicity of exposition, suppose that $\epsilon^(1) = S^{(2)}$ though the following argument holds whenever $\epsilon^{(1)}$ and $S^{(2)}$ are correlated.  Finally, suppose $\epsilon^{(2)}$ is standard normal and independent of all other quantities.  In this setting, $S^{(1)}$ and $S^{(2)}$ are calibrated predictors of $S^{(1)}$ and $S^{(2)}$, respectively.  Note that since $E\left(Y^{(1)} | S^{(1)} + S^{(2)} \right)=  S^{(1)} + S^{(2)}$ , and $E \left(Y^{(2)} | S^{(1)} + S^{(2)} \right) \neq  S^{(2)}$, it is readily seen that $S^{(1)} + S^{(2)}$ is not a calibrated predictor of $Y^{(1)} + Y^{(2)}$.  Furthermore, we have that
\[
E\left(Y^{(1)} + Y^{(2)} | \tilde S^{(1)}, \tilde S^{(2)} \right)=  S^{(1)} + 2 \cdot S^{(2)}
\]
for $\tilde S^{(i)} = E (Y^{(i)} | S^{(1)}, S^{(2)} )$.  Now, since $S^{(1)} + 2 \cdot S^{(2)}$ is a lower variance unbiased predictor of $Y^{(1)} + Y^{(2)}$ than $S^{(1)} + S^{(2)}$, we see that transforming the scores can improve mean-squared error of the overall predictions as well as the predictions of the individual models.  In general, if $E(Y^{(1)} | S^{(1)},..., S^{(K)}) \neq S^{(K)}$, this improvement in predictive performance is an expected consequence of the Gauss-Markov Theorem.

%\section{Appendix}
\section{Discussion of Calibration Approaches}\label{sec:calibration_appraoches}
In this section, we provide a brief review of alternate methods of calibrating a model, and give a brief comparison with the non-parametric approach.  A more thorough discussion is given by \cite{HuangTutoricalCalibration}.

\subsection{Binning}

Binning proceeds as follows.  Consider a binning $B_1,...,B_K$ of the score space with $B_i = [l_i, u_i)$.  For a given score, denote by $B(s)$ the bin $B_i$ containing $s$, i.e. for which $s \in B_i$. The binning approach defines transformed scores as $\tilde s = t(s, g)$ where 
\begin{align*}
t(s, g) & = \hat E (Y|S \in B(s), G = g) = \frac{\sum_{j} Y_j \cdot I\left\{ S_j \in B(s), G_j \in g \right\}}{\sum_{j} I\left\{ S_j \in B(s), G_j \in g \right\} }.
\end{align*}
A drawback of this approach is that it is subject to a bias-variance trade-off with respect to the choice of bins.  At one extreme, choosing a single bin amounts to replacing all scores with the average outcome for each group.  While this does calibrate the model, it is clearly unsatisfying with respect to model performance.  At the other extreme, choosing too many bins will result in many bins with a predicted average outcome of zero, which can fail to yield proper calibration asymptotically.  Both of these issues stem from the fact that, in general, we may have
\[
\hat E (Y|S \in B(s), G = g) \nrightarrow E (Y|S = s, G = g).
\]
Note that Theorem \ref{thm:asymptotic_normality} establishes the asymptotically optimal bin width.  In particular, choosing the Nadaray-Watson kernel to be the ``histogram'' kernel, namely
\[
K\left( x \right) = I \left\{ |x| < 1 \right\}~,
\]
we see that choosing the bandwidth parameter as $h = O(M^{-4/5})$ gives the best asymptotic tradeoff between bias and variance.  

\subsection{Isotonic regression:}

Isotonic regression estimates a non-decreasing conditional mean function.  Such as function can be identified by first solving the quadratic program

\[
\min_{\hat Y_1,...,\hat Y _M} \sum_i \left( \hat Y_i - Y_i \right)^2
\]
subject to the constraint that $\hat Y_i > \hat Y_j$ whenever $S_i> S_j$.  Then, any monotonic function $f(\cdot)$ satisfying $f(S_i) = \hat Y_i $ is an isotonic regression estimator.  In comparison with the Nadaraya-Watson estimators, these isotonic regression estimators are not robust against non-monotonicity of the conditional average outcomes.  %The PYMK example presented in Figure \ref{fig:pymkIndBeforeAfter} a real-world example with non-monotonic conditional expected outcomes.  

\subsection{Platt's scaling:}

Platt's scaling fits a logistic regression of the observed outcome labels on a binary classifier's scores, resulting in estimates of the observed outcome of the form
\[
\hat E \left( Y | S \right) = \frac{1}{1+ \exp (\beta_0 + \beta_1 S)}~.
\]

This methodology is generally preferable over non-parametric methods in small data setting, but is not robust against mispecification of the logistic function and is not applicable outside the binary classification setting.  Also, as with isotonic regression, this methodology assumes monotonicity of the conditional expected outcome, which may not hold in practice.  

\section{Alternate Formulation of predictive parity with Repeated Observations}\label{sec:alt_prp}

In repeated measurements setting, such as those commonly arising in internet applications where users repeatedly engage with a system (i.e. they may receive many notifications or see many ranked feed items arising from a recommender system), there are several choices for the conditional expectations which may be considered for assessing predictive parity which we will now discuss.  

Definition \ref{def:pred_rate_parity_dependent} asserts almost sure equality of $E(Y_{m,1} | S_{m,1} = s, G_{m,1} = g_1)$ and $E(Y_{m,1} | S_{m,1} = s, G_{m,1} = g_2)$. However, this quantity may differ from what is typically considered when assessing predictive parity.  Let $(Y^*_1, S^*_1, G^*_1),...,(Y^*_N, S^*_i ,G^*_N)$ for $N = \sum n_m$ be an arbitrary ordering of the observed data, $(Y_{m,i}, S_{m,i}, G_m)$.  An alternative formulation of predictive parity is to require almost sure equality of $E(Y^*_i | S^*_i = s, G^*_i = g_1)$ and $E(Y^*_i | S^*_i = s, G^*_i = g_2)$.

 In general, 
\[
E(Y_{m,1} | S_{m,1} = s, G_{m,1} = g) \neq E(Y^*_{i} | S^*_{i} = s, G^*_{i} = g),
\]
so the requirements of equality of the expectations cannot hold with respect to both formulations.  

We recommend considering equality of the conditional expectations of the form  $E(Y_{m,1} | S_{m,1} = s, G_{m,1} = g)$, which avoids giving undue influence to individuals with disproportionately high representation in the data. As a simple example, consider a health care algorithm predicting likelihood of having a disease.  A patient may repeatedly take such screening tests as part of a routine check-up until getting a positive outcome.  In such cases, affluent users with greater access to health care may tend to test more frequently and have disproportionately many negative outcomes.  This can downward-bias the estimates of average outcome for groups tending to have more affluent individuals.  In such cases, the interpretation of calibration is understood as a per-user quantity.  For instance, it is typically understood that 60\% of randomly chosen people who test with a score of .6 should have the disease, which is in agreement with the per-user formulation of predictive parity.

However, a practitioner who is interested in the quantities $E(Y^*_{i} | S^*_{i} = s, G^*_{i} = g)$ can estimate these expectations using the estimator
\[
\hat f_g(s) = \frac{\frac{1}{M}\sum_{m:G_m = g}\sum_{i = 1}^{n_m} Y_{m,i} K((S_{m,i} - s)/h)}{\frac{1}{M} \sum_{m:G_m = g} \sum_{i = 1}^{n_m} K((S_{m,i} - s)/h)}~.
\]
An analogous result to Theorem \ref{thm:asymptotic_normality} holds for this formulation. 

\begin{comment}
We illustrate the differences in definition of predictive parity under repeated observations in Figure \ref{fig: pymkIndBeforeAfter} which uses the same PYMK data as shown in \ref{fig:IndVsAgg}.  Figure \ref{fig: pymkIndBeforeAfter} shows estimates of $E(Y^*_{i} | S^*_{i} = s, G^*_{i} = g)$ whereas \ref{fig:IndVsAgg} shows estimates of $E(Y_{m,1} | S_{m,1} = s, G_{m,1} = g)$.  These plots clearly show the difference in impression of fairness.  In particular, this Figure \ref{fig: pymkIndBeforeAfter} suggests that infrequent members have disproportionately high average outcome at low score values, which is not the case when considering the formulation of predictive parity in Section \ref{sec:testing}.   

\begin{figure}[t] 
\includegraphics[width=10cm]{Images/pymkIndBeforeAfter.png}
\centering
\caption{We show frequent users (blue) versus infrequent users (red) when assuming all observations are independent}
\label{fig: pymkIndBeforeAfter}
\end{figure}
\end{comment}

\section{Proof of Theorem \ref{thm:asymptotic_normality}}
We follow the proof of asymptotic normality for the Nadaraya-Watson estimator for the iid case in \cite{mcmurry2008} with needed modifications for dependency.  Note that the numerator of the Nadaraya-Watson estimator is an estimator of $a_g(s)$ while the denominator is an estimator of $b_g(s)$. 

\subsection{Supporting results}
%\vspace{-0.2cm}
We will begin by establishing the respective rates of convergence.  Define 
\[
\hat a_{g}(s) = \frac{1}{M}\sum_{m:G_m = g} \frac{1}{n_m}\sum_{i = 1}^{n_m} Y_{m,i} K\left ((S_{m,i} - s)/h \right)
\]
and 
\[
\hat b_{g}(s) = \frac{1}{M}\sum_{m:G_m = g} \frac{1}{n_m}\sum_{i = 1}^{n_m} K \left((S_{m,i} - s)/h \right)~.
\]
\begin{lemma}\label{lemma:meanConsistency}
Under Assumptions (A1)-(A5), (B1) and (B2),
\begin{align*}
E & \left( \frac{1}{M}\sum_{m:G_m = g} \frac{1}{n_m}\sum_{i = 1}^{n_m} Y_{m,i} K\left ((S_{m,i} - s)/h \right) \right) \\ 
& = P(G_m = g) \cdot a_g(s) +o (h^{d})
\end{align*}
and
\begin{align*}
E & \left(  \frac{1}{M}\sum_{m:G_m = g} \frac{1}{n_m}\sum_{i = 1}^{n_m} K \left((S_{m,i} - s)/h \right) \right) \\
& = P(G_m = g) \cdot b_g(s)+o (h^{d_1})    
\end{align*}
\end{lemma}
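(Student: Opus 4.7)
The plan is to reduce the full double sum to a single-member kernel integral using identical distribution across members and the conditional iid assumption, then expand that integral via the standard Taylor plus kernel-moment argument. First, I rewrite $\sum_{m:G_m=g}$ as $\sum_{m=1}^M I\{G_m=g\}$ and use that the tuplets $(G_m, n_m, (Y_{m,1},S_{m,1}), \dots, (Y_{m,n_m},S_{m,n_m}))$ are identically distributed across $m$; linearity of expectation then collapses the outer $\frac{1}{M}$ average into a single-member expectation $E[I\{G_1=g\} \cdot \frac{1}{n_1}\sum_{i=1}^{n_1} Y_{1,i} K((S_{1,i}-s)/h)]$. Conditioning on $(n_1, G_1)$ and invoking Assumption 2, each of the $n_1$ inner summands has the same conditional expectation, the $1/n_1$ exactly cancels the count, and the tower property yields $P(G_1=g)\cdot E[Y_{1,1} K((S_{1,1}-s)/h) \mid G_1=g]$.

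Next, this remaining expectation becomes $\int K((s'-s)/h)\, a_g(s')\, ds'$ after integrating out $y$ against the conditional joint density $f_{Y,S|g}$, and the change of variables $u=(s'-s)/h$ turns it into $h\int K(u)\, a_g(s+hu)\, du$. Because $f_g = a_g/b_g$, one has $a_g = b_g f_g$, so the product rule gives $a_g$ at least $d=\min\{d_1,d_2\}$ continuous derivatives near $s$. Taylor-expanding $a_g(s+hu)$ around $s$ to order $d$ and integrating termwise against $K$ annihilates every term of orders $1$ through $d$, because Assumption 4 makes the first $d_k \geq d$ moments of $K$ vanish; the super-polynomial tail decay of $K$ (also in Assumption 4) controls the remainder uniformly, leaving $a_g(s) + o(h^d)$ inside the integral. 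Repeating the identical argument with $b_g$ in place of $a_g$, but only Taylor-expanding to order $d_1$, gives the $o(h^{d_1})$ remainder for the denominator statement of the lemma.

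The main obstacle is essentially bookkeeping: verifying that $a_g$ inherits enough smoothness from its factors, which is handled by the product rule, and that the Taylor remainder integrated against $K$ is $o(h^d)$ uniformly. The latter follows from the polynomial-in-$u$ growth of the remainder being dominated by the super-polynomial decay of $K$, together with continuity of the top-order derivative of $a_g$. Notably, the intra-member dependence plays no role at this stage, since identical distribution across members already collapses the expectation of the outer average to a single summand regardless of the within-member dependence structure; the dependence will only matter when computing the variance of $\hat a_g - E\hat a_g$ in the subsequent step of the proof of Theorem \ref{thm:asymptotic_normality}.
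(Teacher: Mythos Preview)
Your approach is essentially the paper's: collapse the double sum to a single-member kernel expectation via identical distribution and Assumption~2, pass to $\int K((s'-s)/h)\,a_g(s')\,ds'$, substitute $u=(s'-s)/h$, and then Taylor-expand the integrand, using the vanishing kernel moments to kill orders $1,\dots,d$ and the tail decay to control the remainder. The only cosmetic difference is that the paper makes the truncation to a neighborhood $N_\epsilon(s)$ and its complement explicit before expanding, whereas you fold that step into the phrase ``super-polynomial tail decay controls the remainder uniformly.''
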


\begin{proof}[Proof of Lemma \ref{lemma:meanConsistency}]
Define $M_g = \left| \left\{ m: G_m = g\right\} \right|$ to be the number of members in group $g$. Note that 
\begin{align*}
E & \left( \frac{1}{M}\sum_{m:G_m = g}\frac{1}{n_m}\sum_{i = 1}^{n_m} Y_{m,i} K((S_{m,i} - s)/h) \right) \\
&= E \left( E \left( \left. \frac{1}{M}\sum_{m:G_m = g}\frac{1}{n_m}\sum_{i = 1}^{n_m} Y_{m,i} K((S_{m,i} - s)/h) \right|  n_m, M_g \right) \right) \\
&= E \left(\frac{1}{M}\sum_{m:G_m = g}\frac{1}{n_m}\sum_{i = 1}^{n_m}  E \left( \left. Y_{m,i} K((S_{m,i} - s)/h) \right| n_m, M_g \right) \right)\\
&= E \left( \frac{M_g}{M}Y_{m,1} K((S_{m,1} - s)/h)  \right) \\
&= P(G_m = g) E \left( Y_{m,1} K((S_{m,1} - s)/h) \right) \\
\end{align*}
Now, 
\begin{align*}
E & \left( Y_{m,1} K((S_{m,1} - s)/h) \right) = E \left( E\left( \left. Y_{m,1} K((S_{m,1} - s)/h) \right| S_{m,i}\right)\right) \\
%&= E \left( E\left( \left. Y_{m,1} K((S_{m,1} - s)/h) \right| S_{m,i}\right)\right) \\
&= E \left( E\left( \left. Y_{m,1} \right| S_{m,i}\right) K((S_{m,1} - s)/h) \right) \\
&= \int^{\infty}_{-\infty} E\left( \left. Y_{m,1} \right| S_{m,i} = s'\right) K((s' - s)/h) f_S(s') d s' \\
&= \int^{\infty}_{-\infty} f_{g}(s') K((s' - s)/h) f_S(s') d s' \\
\end{align*}

Fix an $\epsilon > 0$, and denote by $N_\epsilon (s)$ the neighborhood of radius epsilon centered around $s$.  Then, this integral can be expressed as 
\begin{align*}
\int^{\infty}_{-\infty} & f_{g}(s') K((s' - s)/h) f_S(s') d s' \\
&= \int_{N_\epsilon(s)} f_{g}(s') K((s' - s)/h) f_S(s') d s' \\ 
&+ \int_{N^c_\epsilon(s)} f_{g}(s') K((s' - s)/h) f_S(s') d s'   %= & \int_{N^c_\epsilon(s)} f_{g}(s') K((s' - s)/h) f_S(s') d s' 
\end{align*}

The assumption that $K$ has $d_k$ finite moments implies that 
\[
 \int_{N^c_\epsilon(s)} f_{g}(s') K((s' - s)/h) f_S(s') d s' = o(h^{d_k})
\]
Through a transformation of variables, $t = (s' - s)/h$
\begin{align*}
\int_{N_\epsilon(s)} & f_{g}(s') K((s' - s)/h) f_S(s') d s' \\
& = \int^{\epsilon/h}_{-\epsilon/h} f_{g}(s + ht) f_S(s + ht) K(t)  d t  
\end{align*}

By assumption, $f_{g}(s') f_S(s')$ has $d$ bounded and continuous derivatives on $N_\epsilon (s)$.  It follows that there exists an $s^* \in [s, s+ht]$ such that 
\begin{align*}
f_{g}(s + ht) f_S(s + ht) = & f_{g}(s) f_S(s) \sum_{d = 1}^{d_p-1}\frac{(ht)^d}{d!} (f_g \cdot f_S)^{(d)}(s) \\
&+ \frac{(ht)^{d_p}}{d_p!}(f_g \cdot f_S)^{(d_p)}(s + s^*) 
\end{align*}
and we can write 
\begin{align*}
& \int^{\epsilon/h}_{-\epsilon/h}  f_{g}(s + ht) f_S(s + ht) K(t)  d t   \\
& = \int^{\epsilon/h}_{-\epsilon/h} f_{g}(s) f_S(s)K(t)  d t +  \sum_{d' = 1}^{d-1} \int^{\epsilon/h}_{-\epsilon/h}\frac{(ht)^{d'}}{d'!} (f_g \cdot f_S)^{(d')}(s)K(t)  d t \\
& + \int^{\epsilon/h}_{-\epsilon/h}\frac{(ht)^{d}}{d!} (f_g \cdot f_S)^{(d)}(s + s^*) K(t)  d t
\end{align*}

By the assumption on the tail decay of $K(\cdot)$, and the fact that $K(\cdot)$ integrates to one, we have that 
\begin{align*}
\int^{\epsilon/h}_{-\epsilon/h} f_{g}(s + ht) f_S(s + ht) K(t)  d t 
&= \int^{\infty}_{-\infty} f_{g}(s) f_S(s) K(t)  d t + o(h^{d}) \\
& = b_g(s) + o(h^{d}). \\ 
\end{align*}

Similarly, we have 
\[
\int^{\epsilon/h}_{-\epsilon/h} \frac{(ht)^{d'}}{d'!} (f_g \cdot f_S)^{(d')}(s)K(t)  d t = o(h^{d})
\]
for $d' = 1,...,d-1$ since the first $d$ moments are assumed to be zero.  Finally, since the derivatives of $(f_g \cdot f_S(s))$ are bounded and continuous, we have that there exists a $\delta >0$ such that $(f_g \cdot f_S)^{(d_p)}(s+s^*) - (f_g \cdot f_S)^{(d_p)}(s) < \delta$ for all $h$ sufficiently small. Therefore
\begin{align*}
\int^{\epsilon/h}_{-\epsilon/h} & \frac{(ht)^{d_p}}{d_p!} (f_g \cdot f_S)^{(d)}(s+ s^*)K(t)  d t \\
& = \int^{\epsilon/h}_{-\epsilon/h}\frac{(ht)^{d_p}}{d_p!} (f_g \cdot f_S)^{(d_p)}(s)K(t)  d t  \\
& +  \int^{\epsilon/h}_{-\epsilon/h}\frac{(ht)^{d_p}}{d_p!} \left( (f_g \cdot f_S)^{(d_p)}(s+s^*)  - (f_g \cdot f_S)^{(d_p)}(s) \right) K(t) dt \\
& + o(h^{d_1})\\
& = o(h^{d_1})
\end{align*}
since 
\begin{align*}
\int^{\epsilon/h}_{-\epsilon/h} & \frac{(ht)^{d_p}}{d_p!} \left( (f_g \cdot f_S)^{(d_p)}(s+s^*)  - (f_g \cdot f_S)^{(d_p)}(s) \right)K(t) dt \\
& \leq \int^{\epsilon/h}_{-\epsilon/h}\frac{(ht)^{d_p}}{d_p!} \delta K(t) dt
\end{align*}
for $h$ suitably small, and the latter term is  $o(h^{d_1})$.  This proves the first convergence in the lemma.  The second can be proven with a similar, albeit less involved argument. 
\end{proof}

\begin{lemma}\label{lemma:varConsistency}
Under Assumptions (A1)-(A5), (B1) and (B2),
\begin{align*}
\text{var} & \left( \frac{1}{M}\sum_{m:G_m = g} \frac{1}{n_m}\sum_{i = 1}^{n_m} Y_{m,i} K((S_{m,i} - s)/h) \right) \\
=& \frac{\left( \overline{f^2_g}(s) + \overline{\sigma^2_g}(s ) \right) f_S(s) + \overline{\rho_g}(s; n_m)  f^2_S(s)}{Mh} \cdot \int^\infty_{-\infty} K^2(t) dt \\ 
&  + O(M^{-1}) +o((Mh)^{-1}),
\end{align*}
\begin{align*}
\text{var} & \left( \frac{1}{M}\sum_{m:G_m = g} \frac{1}{n_m}\sum_{i = 1}^{n_m} K((S_{m,i} - s)/h) \right) \\
=& \frac{\left( \overline{e^2_g}(s) + \overline{v^2_g}(s ) \right) f_S(s)\int^\infty_{-\infty}\cdot K^2(t) dt}{Mh} + O(M^{-1}) +o((Mh)^{-1}),
\end{align*}
and 
\begin{align*}
\text{cov} & \left( \frac{1}{M}\sum_{m:G_m = g} \frac{1}{n_m} \sum_{i = 1}^{n_m} Y_{m,i} K((S_{m,i} - s)/h), \right.  \\
& \left. \frac{1}{M}\sum_{m:G_m = g} \frac{1}{n_m}\sum_{i = 1}^{n_m} K((S_{m,i} - s)/h) \right)  \\
 = & \frac{ \overline{f_g}(s)  f_S(s)\int^\infty_{-\infty}\cdot K^2(t) dt}{Mh} + O(N^{-1}) +o((Mh)^{-1}),
\end{align*}
\end{lemma}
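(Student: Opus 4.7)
The strategy is to exploit independence across members to reduce each variance and covariance to a single-member quantity, and then expand that quantity using the conditional independence within a member given by Assumption~2 together with the same change-of-variables and Taylor expansion that drove the proof of Lemma~\ref{lemma:meanConsistency}. Write
\[
U_m = I\{G_m=g\}\cdot \frac{1}{n_m}\sum_{i=1}^{n_m} Y_{m,i} K((S_{m,i}-s)/h), \quad V_m = I\{G_m=g\}\cdot \frac{1}{n_m}\sum_{i=1}^{n_m} K((S_{m,i}-s)/h),
\]
so that by member-level independence the three target quantities equal $\tfrac{1}{M}\text{var}(U_1)$, $\tfrac{1}{M}\text{var}(V_1)$, and $\tfrac{1}{M}\text{cov}(U_1,V_1)$ respectively.

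To compute $\text{var}(U_1)$ I first condition on $(n_m, G_m)$. Under Assumption~2 the pairs $(Y_{m,i}, S_{m,i})$ are then i.i.d., so expanding the inner square yields a diagonal and an off-diagonal part:
\[
E(U_1^2 | n_m, G_m) = I\{G_m=g\}\left[\frac{1}{n_m}E(Y^2 K^2 | n_m, G_m) + \frac{n_m-1}{n_m}(E(YK | n_m, G_m))^2\right].
\]
Each inner expectation is a one-dimensional integral against $K$ or $K^2$; applying the substitution $t = (s'-s)/h$ and Taylor expanding exactly as in the proof of Lemma~\ref{lemma:meanConsistency} extracts the appropriate $h$-scaling with first correction $o(h^{d_1})$. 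Taking expectation over $(n_m, G_m)$ and using the identity $E(Y'^2 | S, n_m) = \sigma^2_g(S; n_m) + f^2_g(S; n_m)$, where $Y' = Y \cdot I\{G_m=g\}$, then produces
\[
\tfrac{1}{M}\text{var}(U_1) = \frac{\left(\overline{\sigma^2_g}(s) + \overline{f^2_g}(s)\right) f_S(s) \int K^2(t)\, dt}{Mh} + O(N^{-1}) + o((Mh)^{-1});
\]
the off-diagonal cross-product term and the $(EU_1)^2$ subtraction (which is $O(h^2)$ by Lemma~\ref{lemma:meanConsistency}) are absorbed into the remainder.

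The variance of $V_1$ follows the identical template with $Y \equiv 1$: the diagonal term then involves $E(I\{G_m=g\}^2 | S, n_m) = E(I\{G_m=g\} | S, n_m)$, whose decomposition into variance-plus-squared-mean at the level of $I\{G_m=g\}$, averaged against $1/n_m$, produces exactly $\overline{e^2_g}(s) + \overline{v^2_g}(s)$. The covariance is the mixed version,
\[
E(U_1 V_1 | n_m, G_m) = I\{G_m=g\}\left[\frac{1}{n_m} E(YK^2 | n_m, G_m) + \frac{n_m-1}{n_m} E(YK | n_m, G_m)E(K | n_m, G_m)\right],
\]
whose diagonal term yields the claimed $\overline{f_g}(s) f_S(s) \int K^2 / (Mh)$ leading order, while the off-diagonal term and the subtracted $E(U_1) E(V_1) = O(h^2)$ fall into the same $O(N^{-1}) + o((Mh)^{-1})$ remainder.

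The main obstacle is the bookkeeping: tracking how the within-member $1/n_m$ weighting, the randomness of $n_m$ across members, and the conditioning on $G_m$ combine to reproduce precisely the weighted averages $\overline{\sigma^2_g}, \overline{f^2_g}, \overline{e^2_g}, \overline{v^2_g}, \overline{f_g}$ in the leading-order constants, and verifying that every off-diagonal correction together with the $(EU_1)^2$-type subtractions is dominated by the stated $O(N^{-1}) + o((Mh)^{-1})$ remainder under Assumption~1.
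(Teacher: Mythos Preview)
Your plan is correct and is essentially the same computation the paper carries out, just organized differently. The paper proceeds by two nested applications of the law of total variance---first conditioning on $n_m$ (and arguing that $\text{var}(E(\cdot\mid n_m))$ vanishes), then conditioning additionally on $S_{m,i}$ to split $\text{var}(Y'_{m,i}K\mid n_m)$ into $\text{var}(f_g(S;n_m)K\mid n_m)+E(\sigma_g^2(S;n_m)K^2\mid n_m)$, and finally applying the same change of variables $t=(s'-s)/h$ as in Lemma~\ref{lemma:meanConsistency}. Your route---expanding $E(U_m^2\mid n_m,G_m)$ directly into diagonal and off-diagonal pieces and then subtracting $(EU_m)^2$---reaches the identical leading term, and your claim that the off-diagonal cross-product together with the $(EU_m)^2$ subtraction lands in the remainder plays exactly the role of the paper's assertion that $\text{var}(E(\cdot\mid n_m))=0$. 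The paper only writes out the first variance in detail and states that the other two follow analogously, which is precisely what you do as well.
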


\begin{proof}[Proof of Lemma \ref{lemma:varConsistency}]
We begin by establishing the convergence of
\begin{align*}
\text{var} &\left( \frac{1}{M}\sum_{m} \frac{1}{n_m}\sum_{i = 1}^{n_m} Y_{m,i,g} K((S_{m,i} - s)/h) \right) \\
& = \frac{1}{M^2} \sum_m \text{var} \left( \frac{1}{n_m}\sum_{i = 1}^{n_m} Y_{m,i,g} K((S_{m,i} - s)/h) \right)~.
\end{align*}
By the law of total variances, 
\begin{align*}
\text{var} & \left( \frac{1}{n_m}\sum_{i = 1}^{n_m} Y_{m,i,g} K((S_{m,i} - s)/h) \right) \\
= & \text{var} \left( E\left( \left. \frac{1}{n_m}\sum_{i = 1}^{n_m} Y_{m,i,g} K((S_{m,i} - s)/h)\right| n_m\right) \right) \\
&+ E \left( \text{var} \left( \left. \frac{1}{n_m}\sum_{i = 1}^{n_m} Y_{m,i,g} K((S_{m,i} - s)/h)\right| n_m\right) \right) 
\end{align*}
 
It is readily seen from the equations derived below that the first term on the right hand side is constant up to a $o(h^d)$ term.  The inner variance in the second term can be written as
\begin{align*}
\text{var} &\left( \left. \frac{1}{n_m}\sum_{i = 1}^{n_m} Y_{m,i,g} K((S_{m,i} - s)/h)\right| n_m\right) \\
& = \frac{1}{n^2_m}\sum_{i = 1}^{n_m} \text{var} \left( \left.  Y_{m,i,g} K((S_{m,i} - s)/h)\right|n_m \right)\\
& + \frac{1}{n^2_m} \sum_{i \neq j} \text{cov} \left( \left.  Y_{m,i,g} K((S_{m,i} - s)/h), Y_{m,j,g} K((S_{m,j} - s)/h)\right|n_m \right)
\end{align*}

Again applying the law of total variance, we have
\begin{align*}
\text{var} & \left( \left. Y_{m,i,g} K((S_{m,i} - s)/h)\right|n_m \right) \\
& =  \text{var} \left( \left.  E \left( \left. Y_{m,i,g} K((S_{m,i} - s)/h) \right|n_m, S_{m,i} \right) \right|n_m \right) \\
&+ E \left( \left.  \text{var} \left( \left. Y_{m,i,g} K((S_{m,i} - s)/h) \right|n_m, S_{m,i} \right) \right|n_m \right) 
\end{align*}

Note that 
\begin{align*}
E & \left( \left. Y_{m,i,g} K((S_{m,i} - s)/h) \right|n_m, S_{m,i} \right) \\
& = f_g(S_i; n_m) \cdot K((S_{m,i} - s)/h) 
\end{align*}
and 
\begin{align*}
\text{var} & \left( \left. Y_{m,i,g} K((S_{m,i} - s)/h) \right|n_m, S_{m,i} \right) \\
= & K^2((S_{m,i} - s)/h) \text{var} \left( \left. Y_{m,i,g}  \right|n_m, S_{m,i} \right) \\
= & K^2((S_{m,i} - s)/h) \sigma^2_g(S_{m,i};n_m).
\end{align*}

Arguing as in Lemma \ref{lemma:meanConsistency}, it is readily seen that 
\begin{align*}
E & \left( \left. f_g(S_i; n_m) \cdot K((S_{m,i} - s)/h) \right| n_m \right) = f_g(s; n_m) + o(h^{d})    
\end{align*}

It follows that 
\begin{align*}
\text{var} & \left( \left.  Y_{m,i,g} K((S_{m,i} - s)/h)\right|n_m \right) \\
= & E \left( \left. \left[  f^2_g(S_i; n_m) + \sigma^2_g(S_{m,i};n_m) \right] \cdot K^2((S_{m,i} - s)/h)  \right| n_m \right) \\
& + f^2_g(s; n_m)  + o(h^{d_1}) \\
=& \int^\infty_{-\infty} \left[  f^2_g(s'; n_m) + \sigma^2_g(s';n_m) \right] \cdot f_S(s')  \cdot K^2( s' - s)/h) ds' \\
& +f^2_g(s; n_m)  + o(h^{d_1}) \\
= & \frac{1}{h} \int^\infty_{-\infty} \left[  f^2_g(s + th; n_m) + \sigma^2_g(s + th;n_m) \right] \cdot f_S(s + th) \cdot K^2(t) dt \\
&+f^2_g(s; n_m)  + o(h^{d_1})\\
= & \frac{ \left(  f^2_g(s; n_m) + \sigma^2_g(s ;n_m) \right) \cdot f_S(s) }{ \int^\infty_{-\infty} K^2(t) dt} +f^2_g(s; n_m)  + o(h^{d})
\end{align*}
To address the covariance terms, we first apply the law of total covariance, which gives 
\begin{align*}
 \text{cov} & \left( \left. Y_{m,i,g} K((S_{m,i} - s)/h),  Y_{m,j,g} K((S_{m,j} - s)/h)\right| n_m \right) \\
 & = E \left( \text{cov} \left( Y_{m,i,g} K((S_{m,i} - s)/h), \right. \right. \\
 & \left. \left. \left. Y_{m,j,g} K((S_{m,j} - s)/h)\right|n_m, S_{m,i}, S_{m,j}\right) \right) \\
 & + \text{cov} \left( E \left( \left. Y_{m,i,g} K((S_{m,i} - s)/h)\right|n_m, S_{m,i}, S_{m,j}\right),  \right. \\
 & \left. E \left( \left.  Y_{m,j,g} K((S_{m,j} - s)/h)\right|n_m, S_{m,i}, S_{m,j}\right) \right)
\end{align*}
Using the independence specified in assumption (B5), we have that
\begin{align*}
E & \left( \left.  Y_{m,i,g} K((S_{m,i} - s)/h)\right|n_m, S_{m,i}, S_{m,j}\right) \\
& = E \left( \left.  Y_{m,i,g} K((S_{m,i} - s)/h)\right|n_m, S_{m,i}\right) \\
& = f_g(S_{m,i}; n_m) \cdot K((S_{m,i} - s)/h)~.
\end{align*}
Again using the independence, we have that the covariance of these terms is zero.  When this independence does not hold, the covariance of these terms depend on the joint distribution of the scores and labels, and possibly the index when assumption (B5) is not met. 
Now, 
\begin{align*}
\text{cov} & \left( \left.  Y_{m,i,g} K((S_{m,i} - s)/h), Y_{m,j,g} K((S_{m,j} - s)/h)\right|n_m, S_{m,i}, S_{m,j}\right) \\
= & \text{cov} \left( \left.  Y_{m,i,g} , Y_{m,j,g} \right|n_m, S_{m,i}, S_{m,j}\right) \\
& \cdot K((S_{m,i} - s)/h) \cdot K((S_{m,j} - s)/h) \\
=& \rho_g \left( S_{m,i}, S_{m,j}; n_m \right)\cdot K((S_{m,i} - s)/h) \cdot K((S_{m,j} - s)/h) 
\end{align*}
The expectation of this quantity is 
\begin{align*}
E & \left(\rho_g \left( S_{m,i}, S_{m,j}; n_m \right)\cdot K((S_{m,i} - s)/h) \cdot K((S_{m,j} - s)/h)  \right) \\
=& \int \int \rho_g \left( s_1, s_2; n_m \right)\cdot K((s_1 - s)/h) \\
& \cdot K((s_2 - s)/h) f_S(s_1)f_S(s_2)ds_1 ds_2 \\
= & \frac{\rho_g \left( s, s; n_m \right) f^2_S(s_1)}{h}\int^\infty_{-\infty} K^2(t) dt + o(h^d)~
\end{align*}
We have made use of assumption (B5) in the first equality.  If this does not hold, the computation requires use of the joint distribution of two scores.  If assumption (B4) does not hold, this further requires the joint distribution to depend on the indices of the scores.   

Therefore, 
\begin{align*}
\text{var} &\left( \frac{1}{n_m}\sum_{i = 1}^{n_m} Y_{m,i,g} K((S_{m,i} - s)/h) \right)
\\
= & \frac{1}{h}E\left( \frac{1}{n_m}  f^2_g(s; n_m) + \sigma^2_g(s ;n_m) \right) f_S(s)\int^\infty_{-\infty}\cdot K^2(t) dt \\
& +\frac{1}{h}E\left( \frac{n_m(n_m-1)}{n_m}  \rho_g(s,s; n_m) ) \right) f^2_S(s)\int^\infty_{-\infty}\cdot K^2(t) dt \\
& + E\left( f^2_g(s; n_m) \right) +   + o(h^{d_1})\\
 =& \frac{1}{h}\left( \left( \overline{f^2_g}(s) + \overline{\sigma^2_g}(s ) \right) f_S(s) + \overline{\rho_g}(s; n_m)  f^2_S(s)\right)  \\
 &+  E\left( f^2_g(s; n_m) \right)  + o(h^{d})
\end{align*}
Combining these equations, we have that
\begin{align*}
\frac{1}{M^2} \sum_m \text{var}& \left( \frac{1}{n_m}\sum_{i = 1}^{n_m} Y_{m,i,g} K((S_{m,i} - s)/h) \right)\\
= &  \frac{\left( \overline{f^2_g}(s) + \overline{\sigma^2_g}(s ) \right) f_S(s) + \overline{\rho_g}(s; n_m)  f^2_S(s)}{Mh} \int^\infty_{-\infty}\cdot K^2(t) dt \\
& + O(M^{-1}) +o((Mh)^{-1}),
\end{align*}
which proves the first convergence in the lemma.  The other convergence results are proved by similar arguments.  
\end{proof}

We now show the joint asymptotic normality of $a_g(s)$ and $b_g(s)$

\begin{lemma}\label{lemma:joint}
Under Assumptions (A1)-(A5), (B1) and (B2),
\begin{align*}
\sqrt{Mh} \left( c_1 \cdot \hat a_g(s) + \right. &\left. c_2 \cdot \hat b_g(s) - c_1 \cdot a_g(s) + c_2 \cdot b_g(s) \right)  \rightarrow N \left( 0, \gamma(s) \right)
\end{align*}
where $\gamma(s)$ is the appropriate variance according to the variances and covariances given in Lemma \ref{lemma:varConsistency}.
\end{lemma}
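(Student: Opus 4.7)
The plan is to invoke the Cramér-Wold device, reducing joint asymptotic normality of $(\hat a_g(s), \hat b_g(s))$ to the asymptotic normality of an arbitrary scalar linear combination $c_1 \hat a_g(s) + c_2 \hat b_g(s)$. For fixed $c_1, c_2$, I would define the member-level summand
\[
W_m = I\{G_m = g\} \cdot \frac{1}{n_m}\sum_{i=1}^{n_m} (c_1 Y_{m,i} + c_2) K\left((S_{m,i}-s)/h\right),
\]
so that $c_1 \hat a_g(s) + c_2 \hat b_g(s) = M^{-1}\sum_{m=1}^M W_m$. Because observations are independent across members, the $W_m$ are i.i.d. for each fixed $h$, and the problem becomes a central limit theorem for a triangular array of i.i.d. summands whose distribution varies with $M$ through the bandwidth $h = h_M$.

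Next, I would identify the mean and scaled variance of $M^{-1}\sum_m W_m$ directly from Lemmas \ref{lemma:meanConsistency} and \ref{lemma:varConsistency}, using linearity for the mean and bilinearity of covariance for the cross term. This yields the centering $c_1 a_g(s) + c_2 b_g(s)$ with error $o(h^d)$ and a limiting variance $\gamma(s)$ equal to the appropriate quadratic form in $c_1, c_2$ built from the expressions in Lemma \ref{lemma:varConsistency} (with the shared factor $f_S(s) \int K^2(t) dt$).

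Third, I would apply the Lindeberg-Feller central limit theorem to the triangular array $T_m := \sqrt{h/M}\,(W_m - EW_m)$; by construction $\sum_m \text{var}(T_m) = h \cdot \text{var}(W_m) \to \gamma(s)$. The Lindeberg condition, after Markov's inequality at exponent $2+\delta$, reduces to showing
\[
M \cdot E|T_m|^{2+\delta} = h \cdot (h/M)^{\delta/2} \cdot E|W_m - EW_m|^{2+\delta} \to 0.
\]
By convexity of $|\cdot|^{2+\delta}$ applied to the internal $n_m^{-1}$ average, Assumption 6 bounding $E(|Y|^{2+\delta} \mid S = s')$, and the change of variables $t = (s'-s)/h$ in $E|K((S_{m,1}-s)/h)|^{2+\delta}$, one obtains $E|W_m|^{2+\delta} = O(h)$. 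Hence the Lindeberg expression is $O(h^{2+\delta/2}/M^{\delta/2})$, which vanishes by Assumption 1.

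The main obstacle is the moment bookkeeping for the Lindeberg condition. Controlling $E|W_m|^{2+\delta}$ requires care because $W_m$ is an average over $n_m$ observations; convexity reduces this to the single-observation moment of $(c_1 Y_{m,1}+c_2)K((S_{m,1}-s)/h)$, at which point Assumption 6 together with the tail decay of $K$ delivers the $O(h)$ bound. Once the Lindeberg condition is verified, the triangular-array CLT produces asymptotic normality of the scalar combination with variance $\gamma(s)$; since $c_1, c_2$ were arbitrary, the Cramér-Wold device delivers joint asymptotic normality of $(\hat a_g(s), \hat b_g(s))$ with the covariance structure asserted in Lemma \ref{lemma:joint}.
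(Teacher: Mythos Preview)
Your proposal is correct and follows essentially the same route as the paper, which simply invokes the Liapunov CLT and notes that the Liapunov condition follows from Assumption~6 together with the change-of-variables argument of Lemma~\ref{lemma:meanConsistency}. Your verification of Lindeberg via Markov's inequality at exponent $2+\delta$ is exactly the Lyapunov condition, and your $O(h)$ bound on $E|W_m|^{2+\delta}$ is the content of ``arguing as in Lemma~\ref{lemma:meanConsistency}''; the only cosmetic difference is that you frame the statement via Cram\'er--Wold, which is unnecessary since the lemma is already stated for the scalar combination.
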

\begin{proof}[Proof of Lemma \ref{lemma:joint}]
This Lemma follows immediately from the Liapunov Central Limit Theorem where the Liapunov condition follows from arguing as in \ref{lemma:meanConsistency} and from assumption 6 on the finite moments of the outcome variable. 
\end{proof}

\subsection{Main Proof}
%\vspace{-0.2cm}
The main asymptotic normality result is a consequence of this Lemma.  
\begin{proof}[Proof of Theorem \ref{thm:asymptotic_normality}]
Arguing as in Lemma 6 of \cite{mcmurry2008}, it is readily seen that there exists a constant $c > 0$ such that $\hat f_g(s) > c$ for all $n$ sufficiently large.  Therefore, it follows from the intermediate value theorem and Lemma \ref{lemma:meanConsistency} that 
\begin{align*}
\hat f_g(s) -  f_g(s) =& \hat a_g(s) \left( \frac{1}{b_g(s)} - \frac{1}{\delta_n^2} (\hat b_g (s) - b_g(s)) \right) -\frac{a_g(s)}{b_g(s)}\\
= & \frac{1}{b_g(s)}(\hat a_g(s)  - a_g(s) ) + \frac{\hat a_g(s)}{\delta_n^2}(\hat b_g(s)  - b_g(s) ) + o(h^d)
\end{align*}
where 
\[
|\delta_n - b_g(s)| \leq |\hat b_g(s) - b_g(s)|~.
\]
Note that because $\hat b_g(s)$ converges to $b_g(s)$, we have that $\delta_n$ also converges to $b_g(s)$. It follows from Slutsky's Theorem and Lemma \ref{lemma:meanConsistency} that the asymptotic normality in Theorem \ref{thm:asymptotic_normality} holds with 
\begin{align*}
\sigma^2_g (x) = &   \frac{1}{b_g^2(s)P(G = g)^2}  \cdot \left( \overline{f^2_g}(s) + \overline{\sigma^2_g}(s ) \right) f_S(s)\int^\infty_{-\infty}\cdot K^2(t) dt  \\ 
&  + \frac{a_g^2(s)}{b_g^4(s)P(G = g)^2}  \cdot \left( \overline{e^2_g}(s) + \overline{v^2_g}(s ) \right) f_S(s)\int^\infty_{-\infty}\cdot K^2(t) dt \\
&  - 2 \frac{a_g(s)}{b_g^3(s)P(G = g)^2} \cdot \overline{f_g}(s)  f_S(s)\int^\infty_{-\infty}\cdot K^2(t) dt
\end{align*}
\end{proof}

\section{Proof of Other Results}
\begin{proof}[Proof of Lemma \ref{lemma:multi-fairness}]
\begin{align*}
    E & \left( Y_i | S_i = s \right) = E\left( \sum_{k=1}^K w_k Y^k_i |\sum_{k=1}^K w_k S^k_i = s\right) \\
    &= E \left( E\left(\left.  \sum_{k=1}^K w_k Y^k_i \right| S^1_i = s_1,...,S^K_i =s_k, \right. \right. \\ 
    & \qquad \qquad \left. \left. \left. \sum_{k=1}^K w_k S^k_i = s \right) \right|  \sum_{k=1}^K w_k S^k_i = s \right) \\
    &= E \left( \sum_{k=1}^K w_k E\left( \left. Y^k_i \right| S^1_i = s_1,...,S^K_i =s_k, \right. \right. \\
     & \qquad \qquad \left. \left. \left.  \sum_{k=1}^K w_k S^k_i = s \right) \right| \sum_{k=1}^K w_k S^k_i = s \right)  \\
    &= E \left( \sum_{k=1}^K w_k s_k | \sum_{k=1}^K w_k S^k_i = s \right) \\
   & = s
\end{align*}
\end{proof}

\begin{proof}[Proof of Theorem \ref{thm:multi-fairness}]
The result follows immediately from the fact that the transformed scores satisfy
\[
E(Y^k | \tilde S_1 = s_1,..., \tilde S_K = s_K, G= g_i) = s_k
\]
for all $s_1,...,s_K$ and $g_i$.
\end{proof}
\begin{proof}[Proof of Theorem \ref{thm:inframarginality}]
The proof readily follows that of Theorem \ref{thm:asymptotic_normality}. 
\end{proof}
\begin{proof}[Proof of Theorem \ref{thm:infra2}]
Let $\tilde S_g$ denote calibrated scores from group $g$.  Let $\tilde f_{g} (\cdot)$ and $\tilde F_{g} (\cdot)$ denote the conditional average outcome function and distribution function of the calibrated scores, respectively.  The system of equations specified in the Theorem now reduces to finding $t'$ satisfying the single constraint that
\[
\sum_i p_{g_i} \int_0^{t'} \tilde f_{g_1} (t) \tilde F_{g_i}(t) =  \sum_i p_{g_i} \int_0^{t^*} \hat f_{g_1} (t) \hat F_{g_i}(t)
\]
Because 
\[
\sum_i p_{g_i} \int_0^{0} \tilde f_{g_1} (t) \tilde F_{g_i}(t) = 0
\]
and 
\[
\sum_i p_{g_i} \int_0^{\infty} \tilde f_{g_1} (t) \tilde F_{g_i}(t) > \sum_i p_{g_i} \int_0^{t^*} \hat f_{g_1} (t) \hat F_{g_i}(t)
\]
the result follows immediately from the intermediate value theorem. 
\end{proof}

\begin{proof}[Sketch of Proof of Theorem \ref{thm:delta_classification}] 
\begin{align*}
\widehat{\Delta-PCC}_g & = \frac{1}{n_g}\sum_{i:G_i = g} I \left\{  S_i > t\right\}  -  I \left\{  \hat f_g(S_i) > t\right\} \\
&= \frac{1}{n_g}\sum_{i:G_i = g} I \left\{  S_i > t\right\}  -  I \left\{  \hat S^*_i > t\right\} \\
& + \frac{1}{n_g}\sum_{i:G_i = g} I \left\{  S^*_i > t\right\}  - I \left\{  \hat f_g(S_i) > t\right\}
\end{align*}

The first term on the right hand side converges in probability to $\Delta-PCC_g$.  To see that the second term converges in probability to zero, we have that for any $\epsilon > 0$,

\begin{align*}
P & \left( \left| \frac{1}{n_g}\sum_{i:G_i = g} I \left\{  S^*_i > t\right\}  - I \left\{  \hat f_g(S_i) > t\right\} \right| > \epsilon \right) \\
& \leq \frac{1}{\epsilon^2} E \left( \left| \frac{1}{n_g}\sum_{i:G_i = g} I \left\{  S^*_i > t\right\}  - I \left\{  \hat f_g(S_i) > t\right\} \right| \right) \\
& \leq \frac{1}{\epsilon^2} E \left( \left| I \left\{  S^*_1 > t\right\}  - I \left\{  \hat f_g(S_1) > t\right\} \right| \right) \\
& \leq \frac{1}{\epsilon^2} P \left( \left| S^*_1 - \hat f_g(S_1) \right| > |t-S^*| \right) \\ 
& \leq \frac{1}{\epsilon^2} E \left( P \left(\left. \left| S^*_1 - \hat f_g(S_1) \right| > |t-S^*| \right| S^*  \right) \right)~. 
\end{align*}
The inner probability tends to zero uniformly in $S^*$ because of the uniform convergence of $\hat f_g(\cdot)$ (e.g. see \cite{luc78}).  The outer expectation then tends to zero as a consequence of the dominated convergence theorem.  
\end{proof}

\begin{proof}[Sketch of Proof of Theorem \ref{thm:delta_rank}] 
Writing the quantities of 
\[R(S_{i,q},q) - R(\hat f_g(S_{i,q}),q)\] 
as indicators that the rank statistics of the $\hat f_g(S_{i,q})$ equal those of the $S^*_{i,q}$, the convergence follows from a similar argument to Theorem \ref{thm:delta_classification}.
\end{proof}
\end{document}